

\documentclass[aps, pra, superscriptaddress, reprint, floatfix, showkeys]{revtex4-2}	


\RequirePackage[l2tabu, orthodox]{nag}




\usepackage[utf8]{inputenc}
\usepackage{multirow}
\usepackage{amsfonts, amsmath, amsthm, amssymb} 
\usepackage{mathtools}
\usepackage{soul}

\usepackage{float}

\usepackage[caption=false]{subfig}



\usepackage{etoolbox}
\apptocmd{\sloppy}{\hbadness 10000\relax}{}{}


\usepackage{tikz}

\usetikzlibrary{automata, positioning}

\usepackage{standalone}

\usepackage{braket}
\usepackage{enumitem}



\usepackage{graphicx} 


\definecolor{myurlcolor}{rgb}{0,0,0.7}
\definecolor{myrefcolor}{rgb}{0.8,0,0}

\usepackage[
	breaklinks,
	pdftex,
	pdfauthor={Maciej Stankiewicz},
	pdftitle={Benchmarking weak randomness in Quantum and Natural Sources},
	pdfsubject={Quantum Information Theory},
	pdfkeywords={randomness, amplification, earthquake, quantum},
	pdfproducer={Latex with hyperref},
	pdfcreator={pdflatex},
	colorlinks=true, 
	linkcolor=myrefcolor,
	citecolor=myurlcolor,
	urlcolor=myurlcolor
]{hyperref}

\usepackage{cleveref}


\usepackage{orcidlink}



\newtheorem{lemma}{Lemma}[section]

\newtheorem{definition}{Definition}[section]

\newtheorem{axiom}{Axiom}

\newtheorem{remark}{Remark}[section]

\newtheorem{observation}{Observation}

\definecolor{ppblue}{RGB}{46,117,182}
\definecolor{ppred}{RGB}{197, 90, 17}
\definecolor{armygreen}{rgb}{0.29, 0.33, 0.13}
\definecolor{asparagus}{rgb}{0.53, 0.66, 0.42}

\definecolor{brandeisblue}{rgb}{0.0, 0.44, 1.0}
\definecolor{brightmaroon}{rgb}{0.76, 0.13, 0.28}



\def\>{\rangle}
\def\<{\langle}




\begin{document}

\title{
Benchmarking weak randomness in Quantum and Natural Sources
}

\author{Maciej Stankiewicz \orcidlink{0000-0001-8288-3860}}
\email[Correspondence: ]{maciej@stankiewicz.edu.pl}
\affiliation{Institute of Informatics, Faculty of Mathematics, Physics and Informatics, University of Gdańsk, Wita Stwosza 57, 80-308 Gdańsk, Poland}

\author{Roberto Salazar \orcidlink{0000-0003-1737-1433}}
\affiliation{Department of Communications \& Computer Engineering, Faculty of Information \& Communication Technology (ICT), University of Malta, Msida, MSD 2080, Malta}
\affiliation{Faculty of Physics, Astronomy and Applied Computer Science, Jagiellonian University, 30-348 Kraków, Poland}

\author{Mikołaj Czechlewski \orcidlink{0000-0002-4184-2779}}
\affiliation{Institute of Informatics, National Quantum Information Centre, Faculty of Mathematics, Physics and Informatics, University of Gdańsk, Wita Stwosza 57, 80-308 Gdańsk, Poland}

\author{Alejandra Mu\~{n}oz Jensen \orcidlink{0009-0002-3894-160X}}
\affiliation{EMGG TerraData Limitada, Concepción, Chile}

\author{Catalina Morales-Yá\~{n}ez \orcidlink{0000-0002-3230-1014}}
\affiliation{Departamento de Geofísica, Universidad de Concepción, Concepción, Chile}
\affiliation{Department of Civil Engineering, Universidad Católica de la Santísima Concepción, Concepción, Chile}

\author{Omer Sakarya \orcidlink{0000-0001-8667-7474}}
\affiliation{Institute of Informatics, National Quantum Information Centre, Faculty of	Mathematics, Physics and Informatics, University of Gdańsk, Wita Stwosza 57, 80-308 Gdańsk, Poland}

\author{Julio Viveros Carrasco}
\affiliation{Department of Geophysics, Universidad de Concepción, Concepción, Chile}

\author{Stephen Walborn \orcidlink{0000-0002-3346-8625}}
\affiliation{SeQure Quantum SpA, Concepcion, Chile}

\author{Gustavo Lima \orcidlink{0000-0001-7670-6032}}
\affiliation{SeQure Quantum SpA, Concepcion, Chile}

\author{Karol Horodecki \orcidlink{0000-0001-7540-4147}}
\affiliation{Institute of Informatics, National Quantum Information Centre, Faculty of Mathematics, Physics and Informatics, University of Gdańsk, Wita Stwosza 57, 80-308 Gdańsk, Poland}

\date{\today}

\begin{abstract}
Private randomness is a fundamental resource for cryptography, security proofs, and information processing. Quantum devices offer a unique advantage by amplifying weak randomness sources in regimes unattainable by classical means. A central theoretical model for such sources is the Santha–Vazirani (SV) model, yet identifying natural processes that satisfy this model remains a major challenge. Here we take three steps toward addressing this problem. First, we introduce an axiomatic framework for quantifying weak randomness, providing a unified basis for estimating an SV-type source. Second, we develop SVTest, a general-purpose software tool for estimating the SV parameter of an arbitrary data sequence. Third, we apply this framework to both engineered and natural sources. Using data from a self-certifying commercial quantum random number generator with guaranteed min-entropy as a benchmark, we validate the accuracy and limitations of our estimation method. We then analyze geophysical signals associated with seismic activity and find that, depending on the discretization, both earthquakes and local seismic noise can exhibit SV-type randomness. Our results indicate that geophysical phenomena may constitute viable sources of cryptographic randomness, establishing an unexpected connection between quantum information theory and geophysics.
\end{abstract}

\keywords{randomness, quantum, statistical randomness tests, quantum randomness amplification, security, cryptography, quantum random number generators, seismic data}

\maketitle


\section{Introduction}
\label{sec:introduction}

The creation of random bits is ubiquitous across Internet applications and extends its importance to realms such as online banking, where confidentiality against potential adversaries becomes paramount \cite{Bera2017}. 
However, achieving privacy in these bits poses an elusive challenge. 
Proving the privacy of a sequence against adversaries without additional assumptions about the source generating the sequence remains an insurmountable task.
An area of considerable interest lies within the domain of pseudorandom deterministically generated sequences, extensively investigated in \cite{Viega2003} and references therein. 
Despite their popularity, these sequences reveal vulnerability upon

partial disclosure of their initial conditions, resulting in predictability and compromising their security. 
This susceptibility to attacks challenges their seamless integration, underscoring the intricate equilibrium required between security and operational efficiency to pursue robust random bit generation.
One approach to address this challenge involves leveraging physical phenomena as foundational sources, harnessing properties of these natural processes to generate a stream of inherently weak random bits. 
However, this raw randomness requires further refinement through post-processing techniques aimed at distilling private, secure randomness---a process thoroughly examined in \cite{Bera2017, BertaFawziScholz} and references therein.
Some of the above physically certified randomnesses include radioactive decay, specific astronomical data selections, or even fluctuations captured by the camera on a mobile device \cite{Bouda2009}. 
Nevertheless, for this methodology to prove effective, it necessitates at least two statistically independent sources against classical adversaries with access to specific knowledge about the source.
Indeed, M.\ Santha and U.\ Vazirani proved a no-go statement for classical post-processing methods \cite{SV}. 
They considered a family of sources---now called the Santha-Vazirani source---parameterized by $\epsilon>0$, expressing its divergence from the ideal source of uniform bits. 
They showed that extracting a more random sequence---that is, decreasing the value of $\epsilon$---is unattainable by classical methods. 
Here, the following condition constrains $n$ bits of the SV source:
\begin{equation}
\label{eq:sv_intro}
    \frac{1}{2} -\epsilon \leq P(s_i|s_{i-1}, \ldots, s_0, e) \leq \frac{1}{2} + \epsilon
\end{equation}
for all $i\in\{1, \ldots, n\}$ and $|P(s_0|e) - \frac{1}{2} | \leq \epsilon$, where $e$ represents any knowledge about $s_i$ that the adversary may possess (see \cite{SV} for further details). 

This is a striking result since the source has a straightforward structure. It is a mixture of certain permutations of an iid Bernoulli distribution biased by $\epsilon$ around $\{\frac{1}{2}, \frac{1}{2}\}$ distribution \cite{Grudka2014}.

It is also important to mention here that classical randomness amplification is possible using two independent SV sources. 
It is a problematic approach, since it is hard to find two fully uncorrelated SV sources, and even harder to prove or estimate that independence. 
Our testing approach can be used to estimate the quality of such two sources, but independence has to be guaranteed by another means.
Therefore, in the later parts of the paper we focus more on quantum randomness amplification.

On the other hand, private randomness, an early recognized quantum resource \cite{Bera2017}, arises from quantum mechanical axioms guaranteeing randomness in specific measurement outcomes. 
Renner and Colbeck's breakthrough \cite{Colbeck2012} defied the Santha-Vazirani limitation, revealing a method to overcome it utilizing statistically independent quantum devices decoupled from the weak randomness source. 
Optimizing this process produced practical protocols using only two devices \cite{Gallego2013, Brando2016}.

Two primary models underpin private randomness: the Santha-Vazirani source, previously mentioned, and the H$_{\mathrm{min}}$ source \cite{Bouda2014}. 
Our focus is on estimating the privacy level of weak random source, assuming that it conforms to the SV source. However, our developed software could be easily modified to estimate the H$_{\mathrm{min}}$ source as well. 
Notably, recent advances

have employed heuristic modeling to characterize heartbeat signals as SV sources \cite{HeartSV}.


In this article, we go beyond these results in three ways. 
Firstly, we provide a mathematical framework for estimating a parameter $\epsilon$ in a novel axiomatic way.
Secondly, we develop software to perform such an estimation on arbitrary data.
Third, we focus on a different physical source, a more efficient one: (i) a quantum random number generator, (ii) Earthquakes, and (iii) Earth vibrations (seismic noise).

Our approach further develops the technique of attributing $\epsilon$ to a sequence of bits and goes beyond the limitations of the low amount of data that the human heart can generate within a day. 
We base this on the plausible assumption that local and global Earth's vibrations are primarily unpredictable. 
We then note that techniques for privatizing such a priori unpredictable source using quantum devices are known \cite{RotemKessler}. 
We also provide open-source software that enables the estimation of the parameter $\epsilon$ for an input sequence of bits.

\subsection{Private randomness amplification---previous works}

This section presents the background for the problem of randomness amplification and the history of results in this domain compactly, based on the comprehensive review \cite{PirandolaEtAl2020}. 

Let us first observe that generating randomness using a quantum device is straightforward when we trust its inner workings.
Preparing a quantum bit (qubit) in the equal superposition $\ket{+}\coloneqq 1/{\sqrt{2}}(\ket{0} + \ket{1})$  and measuring it in the computational basis enables the generation of random outputs.

Indeed, the Born rule \cite{Born1926} dictates unbiased outcomes for these measurements: $\frac{1}{2} : \ket{0}$ and $\frac{1}{2}: \ket{1}$. 
However, trusting random number generators (RNGs) remains questionable. 
An eavesdropper might have modified the devices during manufacturing, leading them to exhibit predictable behavior advantageous to an adversary in collaboration with the manufacturer. 
The history of successful attacks on classical RNGs is well-documented, particularly following the seminal Trojan hardware attack \cite{Becker2013}. 
Based on altering the dopant level of three transistors within the RNG circuit, the attack introduces a subtle modification to the overall structure, rendering it challenging to detect while compromising the security of randomness.

One of the possible solutions employs quantum device-independent private random number generators relying on a short but secure uniformly random seed called (quantum) private randomness expansion \cite{ColbeckPHD, Pironio2010, Colbeck2011, Acn2016}.
The seed enables the selection of inputs to the device(s) while subsequent outputs undergo processing via quantum-proof extractors. 
Upon the violation of a Bell inequality \cite{Bell1964}---an assessment of the inputs and outputs of a device surpassing a predetermined threshold---the involved parties ascertain that the resulting larger output string maintains both uniformity and remains undisclosed to potential adversaries.

Furthermore, the application scope of device certification via Bell violations has expanded significantly. This approach not only validates the randomness of quantum RNGs \cite{Pironio2010, Remi2023} and secures QKD in quantum communication \cite{PirandolaEtAl2020, Primaatmaja2023} but also proves properties of a device, such as privacy of randomness based solely on the statistics of inputs and outputs of the device, facilitating device-independent quantum information processing \cite{arnon2020}. 
Moreover, it ensures the integrity of quantum networks \cite{Meyer2022,luo2023}, verifies quantum measurements \cite{Datta2021, sarkar2023}, and guarantees fidelity in quantum computing \cite{Sekatski2018}. 
Additionally, it explores fundamental physics \cite{Pfister2016} and validates quantum computational advantages \cite{bravyi2018}. 
The comprehensive nature of Bell violations certification spans numerous applications in quantum technology, ensuring robust security, reliability, and trust.

Here, it's noteworthy to recognize that the adversary's limitations could arise from quantum mechanics, defining a quantum adversary, or from the inherent impossibility of superluminal signaling, defining a non-signaling adversary. 
Researchers have explored setups for generating private randomness in both of these contexts \cite{CSW, Mironowicz2015, ramanathan2020practical, Zhao2023}.

Furthermore, extractors are functional applications \cite{Nisan96, Trev99, RAZ2002}, separating the function's output from the adversary's memory---a repository encompassing all conceivable knowledge (refer to \cite{Bera2017, PirandolaEtAl2020} for comprehensive reviews). 
However, this solution poses a critical challenge, as it necessitates a perfectly random and secure seed, a requirement lacking practical justification.

This problem, however, was resolved by the idea of quantum \textit{private randomness amplification}, on developing which we focus in this manuscript. 
According to this approach, the honest parties have access to a private weak random source (the SV or, in general, H$_{\mathrm{min}}$ source). 
The inputs to the quantum device(s) employ these seeds, and the resultant outputs, combined with other segments of the sequence from the weak random source, contribute to forming the final bit-string, which approaches a state of near-perfect security and randomness.

The above idea is due to the seminal result of R.\ Renner and R.\ Colbeck. 
They proposed the notion of private randomness amplification, showcasing the potential to achieve nearly ideal private randomness by ideally violating the chained Bell inequality \cite{BRAUNSTEIN199022} using device inputs from the $\epsilon$-SV source, assuming $\epsilon < 0.058$. 
In \cite{Grudka2014}, this range expanded to $\epsilon \approx 0.0961$. 
Furthermore, \cite{Gallego2013} illustrated a protocol utilizing a distinct Bell inequality to amplify randomness for \textit{any} $\epsilon \in (0, \frac{1}{2})$. 
However, this approach necessitates numerous non-signaling devices and lacks noise tolerance. Remarkably, \cite{Brando2016} introduced a noise-tolerant protocol achieving similar results but with a finite device count.

The protocol given in \cite{Ramanathan2016} achieved further private randomness amplification under minimal assumptions based on two non-signaling components of a device, such that only specific inputs generate outcomes partially uncorrelated from the non-signaling adversary. 
The results of \cite{Brando2016, Ramanathan2016} rely on the premise that the adversary lacked knowledge of the weak random source; they could only know a parameter of the source (i.e., ``e'' in Eq.\ (\ref{eq:sv_intro}), not specific values of the sequence. 
The previous unrealistic assumption was later replaced in \cite{RotemKessler} by the idea of the ''privatization'' of weak random source. 
Specifically, the source generates its bits individually and makes them publicly available to both the honest parties and the adversary. 
However, before their disclosure, all parties, including the adversary, were only aware of the parameter $\epsilon$.

Nowadays, a dedicated hardware is built for implementing some quantum protocols (for instance QKD protocols). Perhaps quantum randomness amplification protocols would become commercially mature in the near future. However, practical implementation of quantum randomness amplification protocols on quantum computers was presented in \cite{Foreman2023practicalrandomness}. Hence,  although quantum computers are in the early stage of adoption  in the future, running quantum randomness protocol as software on a general purpose quantum computer could be more convenient than constructing dedicated hardware for that goal.

A more realistic but hard-to-work-with model of a source of weak randomness is the min-entropy source. 
It is described by a single condition---a lower bound on the maximal probability. 
A $k$-min-entropy source satisfies  $\log_2 \max_i p_i \geq k$ where the maximum is taken over all events that realize the source. 
A practical randomness amplification using only one two-component device and a min-entropy source with two blocks having each enough min-entropy has been given recently in \cite{RR2023}. 

In contrast, the SV-source model transfers the complexity of the tests to the conditions imposed on the classical source of weak randomness. 
Such requirements urge researchers to delve into new phenomena wherein the physical context ensures them. 
Our work exploits novel physical phenomenologies along this line of research, motivating further advancements in randomness amplification protocols.

This paper is organized as follows.
First in Section \ref{sec:main} we discuss our motivations and briefly introduce our results.
Next, in Section \ref{sec:data} we describe the types of data used in our experiments.
Then, in Section \ref{sec:methods} we formally present our mathematical framework used to testing weak random sources.
Next, in Section \ref{sec:svtest} we provide the technical aspect of the SVTest software that is developed as part of this project.
Then, in Section \ref{sec:results} we show numerical results of our experiments.
Additionally, in Section \ref{sec:avarenes} we briefly discuss the resource aware aspect of our approach.
Finally, in Section \ref{sec:discussion} we conclude our manuscript with discussion and provide open questions. 

\section{Motivation and main results}
\label{sec:main}

Since neither uniformity nor privacy of a given source of randomness can be proven, it always has to be \textit{assumed} that a given source of randomness is weak random source.
However, further, one needs to attribute some parameter---a real number, which gives us an estimate of to what extent a given source is random. 
As it was stated in \cite{PirandolaEtAl2020}:\\
\textit{``Whether a string is random or not is ultimately not a property of the string itself, but on how it is generated''}.
For this reason, the way to attribute the physical parameter that reports the quality of the randomness should be the same for all bit-strings of the same length, under the assumption that it is generated by some SV source. 
The first attempt to attribute the quality parameter to the supposed-to-be SV source was made in \cite{HeartSV}, however, in a heuristic manner and modeling a post-processed human's heartbeat.

In what follows, we provide natural axioms that such attributed $\epsilon$ should satisfy, providing several other definitions of $\epsilon$ that satisfy the axioms.
We also provide software that computes their value. 
Using this software, we further compute the value of $\epsilon$ modeling the Earth vibration (local and global separately) as the $\epsilon$-SV source.
The results are promising since there exist quantum-proof extractors that could further amplify the privacy of this particular source of randomness.

Our choice of weak random source has an important feature: it is practically impossible for a manufacturer to correlate an amplifying device with the values of the bits to profound seismic events. 
This fact is essential, as otherwise attacks on devices are known, and there does not exist a complete countermeasure against such attacks \cite{Wojewodka2017}.

 It is worth emphasizing that, in addition to theoretical and numerical results, we have delivered an open-source software for estimating the privacy of a weak random source modeled as the SV source.
We denote our software as SVTest and provide its description in the latter part of the paper (see Section \ref{sec:svtest}). 

It is important to relate our test to existing tests available online. There are two similar, however, different approaches, which are worth mentioning here. The first is the so-called serial test from the seminal NIST test suite \cite{NIST}. 
The serial test is focused on checking the frequency of all overlapping bit sequences of length $h$. The three main differences in comparison to our tests are the following:
\begin{itemize}
	\item we calculate conditional values instead of frequencies, 
	\item we use the maximal absolute deviation instead of the average square deviation, and
	\item we do not use the cyclic approach at the end of the sequence.
\end{itemize}

Also, the NIST test suite was developed to check pseudo-random sequences to use them directly in classical algorithms. 
That approach demands the sequence to be almost perfectly random. 
On the other hand, since we apply the quantum randomness amplification method, it is enough that the sequence is partially random, assuming that we know the threshold $\epsilon$.

A nice and clear way of understanding the randomness and its connection both to the Borel-normality \cite{Borel-normality} and NISTS tests was introduced in \cite{Solis_2015}. Moreover, in two papers: Rojas et.\ al.\ \cite{DazHernndezRojas2017} and  Martinez et.\ al.\ \cite{Martnez2018}  are presented quite similar but different testing methods. The first one  is based on the Borel-normality criterion, while the second one on Bayesian model selection. It is worth to emphasize that in our case the test takes into account subsequences of consecutive bits that can overlap, while in the both mentioned approach no overlapping is considered.
Furthermore, they claim that the longest length of a considered subsequence should be of the order of $\log_2(\log_2(n))$, while our test takes into account subsequences of length $\log_2(n)$. 

\section{Seismic data}
\label{sec:data}

Since its formation, the planet has manifested natural movements, pulsations, and vibrations due to interactions of fluids and solids caused by heat exchange. 
In particular, seismic waves correspond to mechanical waves of acoustic energy that travel through the media and their surfaces. 
In the case of the Earth, these can be caused naturally by earthquakes, volcanic eruptions, magmatic movements, and landslides. 
They can also be generated artificially by impacts, explosions, and industrial processes \cite[e.g.,][]{Keranen-InduceSeismicity}.

Seismic waves can be classified into two large groups: surface and body waves. 
As their name suggests, surface waves propagate through the Earth's surface, while body waves propagate through the interior of the Earth. 
The principal's body waves are the P-waves (primary or principal) and the S-waves (secondary, shear). 
P-waves are faster and are the first wave to arrive, presenting a longitudinal movement of compression and expansion, and the particles of material affected by its passage move back and forth in the same direction of wave propagation, like an accordion. 
S-waves are a little slower than P-waves; they arrive in the second position as a pure wave, their movement is transverse (shear), and the affected particles move in a perpendicular direction, vertically for S\textsubscript{V}-waves and horizontally for S\textsubscript{H}-waves (for review see \cite[e.g][]{Udias2014, shearer2019, stein2009}  
The instruments that record the Earth's vibrations are called Seismometers \cite{Udias2014}.

Their composition is equivalent (nowadays, instruments are digital) to three masses held by springs, each with a degree of freedom of movement in the three spatial components: vertical, north-horizontal, and east-horizontal. 
Each sensor can record the vibration within a range of amplitude or intensity, frequency, and duration (continuous or triggered by an event), which depends on each instrument.

This instrument is designed to capture the Earth's vibrations caused by earthquakes, including body and surface waves that convey details about the event's magnitude and geometry and insights into the Earth's interior \cite{Lillie1998-mh, Udias2014}.

Additionally, some seismological stations continuously record environmental vibrations of the ground, natural or artificial, called noise, which do not correspond to specific seismic events. 
The stations are deployed all around the planet, and the seismic records are open and free for everyone who needs to use them \cite{Udias2014}. 
Each station is generally connected to multiple national and/or international seismological services, such as the SSN of Mexico (www.ssn.unam.mx), the CSN of Chile (www.csn.uchile.cl), the USGS of the USA (www.usgs.gov), or EarthScope (IRIS \cite{iris} and UNAVCO \cite{iris2}), where the records are stored and available in various formats.

\subsection{Type of the seismic data}

The seismic information obtained to prove randomness corresponds to two data types obtained from seismological stations. 
The first corresponds to waveforms from different earthquakes, and the second corresponds to noise recorded by certain stations.

\subsubsection{Waveforms}

To obtain the waveform of the earthquakes, we selected all the events from the GCMT catalog \cite{CMT1, CMT2}, from 1976–-2021, with moment magnitude Mw (a magnitude based on the amount of energy liberated by the earthquake) between Mw = 6.0 and Mw = 7.0, and with epicentral depths between 30–-1000~km, to avoid events that occurs in the earth's crust, that are generally more complex.
There were 2218 events that fulfill above criteria.
For each earthquake, we downloaded, from the EarthScope seismic service, the record of all the available stations in a range of 10 and 50 degrees of epicentral distance (stations located at between $\sim$ 1000--5000~km from the epicenter). 
It is important to mention that to increase the aleatory of the data and due to the large amount of data, we use all the available earthquake-station pairs that follow the previous criteria (including those that may have incorrect instrument responses or clipping signals).

Then, we transform the raw signal to displacement, velocity, and acceleration and filter the signal between 0.1~Hz and 200~Hz, obtaining the principal body waves and the high-period signals. 
Then, we cut the signal with dynamic windows into two sections. 
The first window had been selected from the P-wave time arrival until $15[s/^\circ]*\Delta$ after its arrival, with $\Delta$ the epicentral distance in degrees, containing principally body waves. 
The second window starts at the S-wave time arrival to $35[s/^\circ]*\Delta$ after its arrival, where the surface waves would be predominant \cite{Duputel2012}. 
Once we have cut the signal, we unite (concatenate) each one of the time windows for all the stations and earthquakes. 
Finally, we will have six files corresponding to the two different time windows for the three signal responses (displacement, velocity, and acceleration).

\subsubsection{Noise}

For noise signals, we selected stations close to populated areas to increase human noise in the records. 
The seismological station selected corresponds to Chile, Argentina, Iceland, Indonesia, Malaysia, Australia, Nepal, India, and the USA, and the time window corresponds to 24 hours, respectively. 
These instruments, corresponding to broadband stations(BH, HH), particularly have a wide range of samplers per second compared with other kinds of stations, such as the long period ones (LH); however, to uniform the data, we resample the data to 4 samples per second.

As before, we transform the records into physics signals: displacement, velocity, and acceleration. 
We filter data between 1~Hz and 15~Hz, which allows the inclusion of seismic human noise (4--14~Hz), microseismicity, and environmental noise (more than 1 Hz) \cite[e.g.,][]{Lecocq2020, Ojeda2021}. 
Then, we will have three files for each station containing the displacement, velocity, and acceleration. 
See Appendix \ref{app:noise} for more technical details. 

\section{Methods}
\label{sec:methods}

This section develops both the theoretical framework and the practical methodology for estimating the quality of a weak random source. 
We begin, in Section \ref{sec:sv}, by recalling the formal definition of an $epsilon$-Santha–Vazirani (SV) source and explaining why it cannot be applied directly to finite data. 
We then, in Section \ref{sec:epsilonh}, introduce a method for estimating $\epsilon_h$, the deviation from uniformity for a fixed history length, which serves as the basic building block of our analysis. 
Next, in Section \ref{sec:ecombine}, we address the problem of combining the sequence($\epsilon_h$) into a single effective parameter $\epsilon$. 
To this end, in Section \ref{sec:axioms}, we propose a set of axioms that any reasonable estimator should satisfy and analyze several functions fulfilling these criteria, identifying the most suitable choice for our purposes. 
We continue, in Section \ref{sec:averages}, by exploring several examples of such functions and identifying the most suitable one for our scenario. 
Finally, in Section \ref{sec:discretization}, we discuss discretization procedures and the final preprocessing step for mapping real-valued data to binary sequences.

\subsection{Formal definition}
\label{sec:sv}

Let us start by recalling the definition of $\epsilon$-SV-Source already introduced in Eq.\ (\ref{eq:sv_intro}).
\begin{definition}[$\epsilon$-Santha-Vazirani-Source \cite{SV}]
\label{def:sv}
    We say that the source $S$ (that produces some binary sequence $s_1, s_2, \ldots$) is $\epsilon$-Santha-Vazirani-Source if we have that
    \begin{equation}
        \begin{split}
            \mathop\forall_{n \in \mathbb{N}} &\quad \mathop\forall_{s_0, \dots, s_{n+1} \in  \{ 0, 1 \} }\\
            &\frac{1}{2} - \epsilon \leq P(S_{n+1} = s_{n+1} | S_{n} = s_{n}, \dots, S_0 = s_0, E)\\ 
            &\qquad\qquad\leq \frac{1}{2} + \epsilon
        \end{split}
    \end{equation}
    where $E$ represents all other random variables in the past light cone of $S_{n+1}$.
\end{definition} 
Note that, for $\epsilon = 0$ we obtain a fully random source, and for $\epsilon = 1/2$ the source can be even deterministic. 

In practice, this definition cannot be applied directly to empirical data. First, the side information $E$ is neither observable nor statistically accessible. Second, real-world estimation necessarily relies on a finite sequence of outputs. We therefore adopt a standard relaxation and remove the explicit conditioning on $E$, retaining the assumption that any correlations with the environment are bounded by $\epsilon$.

This step is crucial, but has to be carefully justified for any tested source. 
It is also very important to note that removing the random variable $E$ does not mean that the SV source is fully uncorrelated with the eavesdropper or the environment. 
This only means that the correlations are not too strong and are limited by the epsilon. 
It is indeed a much weaker assumption.

Now we will present separately, for each type of sources, an argument justifying omission of the random variable $E$.

For the case of QRNG, we can assume that the source is trusted but not perfect. 
Therefore, we are not concerned by strong correlations from outside but still have to treat such QRNG as an SV source and not a fully random generator because of imperfection of quantum hardware.

In the events case (strong earthquakes), since it is unlikely that the adversary could influence the seismic signals, especially the one from the strong earthquake with an epicenter between 30~km and 1000~km below the Earth's surface. 

Seismic noise, by contrast, originates from a superposition of many natural and anthropogenic processes and cannot be assumed to be free from all adversarial influence. While large-scale coordinated human activity could, in principle, introduce structured components into the signal, any such influence is expected to be limited, indirect, and weakly correlated with the extracted bits. We therefore do not claim that seismic noise is a cryptographically secure randomness source, but rather model it as a weak random source whose correlations with the environment are bounded by $\epsilon$ under realistic threat assumptions.

\begin{remark}[Classical randomness amplification]
    Furthermore, when we talk about classical randomness amplification of two independent SV sources, we need them to be fully independent.
Therefore, our approach requires significantly less restrictive assumptions.
\end{remark}

Under the above assumptions, the SV condition reduces to:

\begin{equation}
    \begin{split}
        &\mathop\forall_{n \in \mathbb{N}} \quad \mathop\forall_{s_0, \dots, s_{n+1} \in  \{ 0, 1 \} }\\
        &\frac{1}{2} - \epsilon \leq P(S_{n+1} = s_{n+1} | S_{n} = s_{n}, \dots, S_0 = s_0) \leq \frac{1}{2} + \epsilon.
    \end{split}
\end{equation}
We can then further transform the inequality in the following way.
\begin{equation}
    \begin{split}
        &\mathop\forall_{n \in \mathbb{N}} \quad \max_{s_0, \dots, s_{n+1} \in 
        \{ 0, 1 \} } \\
        &\left| P(S_{n+1} = s_{n+1} | S_{n} = s_{n}, \dots, S_0 = s_0) - \frac{1}{2} \right|  \leq \epsilon.
    \end{split}
\end{equation}

Note that if a source satisfies the $\epsilon$--SV condition, then it also satisfies the $\epsilon'$--SV condition for any $\epsilon' > \epsilon$. Since our goal is to identify the smallest admissible value of $\epsilon$, we assume throughout that $\epsilon$ is chosen optimally. With this convention, we retain the same symbol and rewrite the definition as a supremum over all sequence lengths $n$.

\begin{equation}
    \begin{split}
        \epsilon = &\sup_{n \in \mathbb{N}} \quad \max_{s_0, \dots, s_{n+1} \in 
        \{ 0, 1 \} } \\
        &\left| P(S_{n+1} = s_{n+1} | S_{n} = s_{n}, \dots, S_0 = s_0) - \frac{1}{2} \right|.
    \end{split}
\end{equation}
To take into account finite-memory effects, we introduce a history length $h$ and define:

\begin{equation}
    \begin{split}
        \epsilon &\leq \sup_{n \in \mathbb{N}} \quad \max_{h \in \{ 0, \dots, n \}} \quad \max_{s_{n-h+1}, \ldots, s_{n+1} \in \{ 0,1 \}}\\
        &\left| P(S_{n+1} = s_{n+1} | S_{n} = s_{n}, \dots, S_{n-h+1} = s_{n-h+1}) - \frac{1}{2} \right|.
    \end{split}
\end{equation}
The above inequality is true, since we add maximization over a whole set of histories rather than a single one. 
We can even further enlarge the allowed history length by choosing the supremum instead of the previous maximum. 
\begin{equation}
    \begin{split}
        \epsilon &\leq \sup_{n \in \mathbb{N}} \quad \sup_{h \in \mathbb{N}} \quad \max_{s_{n-h+1}, \ldots, s_{n+1} \in \{ 0,1 \}}\\
        &\left| P(S_{n+1} = s_{n+1} | S_{n} = s_{n}, \dots, S_{n-h+1} = s_{n-h+1}) - \frac{1}{2} \right|.
    \end{split}
\end{equation}

Strictly speaking, the above definition implicitly assumes access to histories longer than those generated by a finite source. In practice, this poses no difficulty: for sufficiently large history lengths, additional past variables do not affect the conditional distribution of the next output. We therefore adopt this simplified notation without loss of generality. With this convention, we can exchange the order of the suprema and arrive at the following expression:

\begin{equation}
\label{eq:epsilontosplit}
    \begin{split}
        \epsilon &\leq \sup_{h \in \mathbb{N}} \quad \sup_{n \in \mathbb{N}} \quad \quad \max_{s_{n-h+1}, \ldots, s_{n+1} \in \{ 0,1 \}}\\
        &\left| P(S_{n+1} = s_{n+1} | S_{n} = s_{n}, \dots, S_{n-h+1} = s_{n-h+1}) - \frac{1}{2} \right|.
    \end{split}
\end{equation}

We now introduce a sequence of parameters $\epsilon_h$, where $h$ denotes the history length, i.e., the number of past variables appearing in the conditional probability.

\begin{equation}
\label{eq:ehfirst}
    \begin{split}
        &\epsilon_h \coloneqq \sup_{n \in \mathbb{N}} \quad \max_{s_{n-h+1}, \ldots, s_{n+1} \in \{ 0,1 \}}\\
    &\left| P(S_{n+1} = s_{n+1} | S_{n} = s_{n}, \dots, S_{n-h+1} = s_{n-h+1}) - \frac{1}{2} \right|.
    \end{split}
\end{equation}

It allows us to rewrite Eq.\ (\ref{eq:epsilontosplit}) in the following form.

\begin{equation}
    \label{eq:sup}
    \epsilon \leq \sup_{h \in \mathbb{N}_0}\epsilon_h.
\end{equation}

We assume that the values of $\epsilon_h$ can be estimated for a suitable range of history lengths $h$. The estimation procedure is described in detail in the next section. Given these estimates, we can then define an effective parameter $\epsilon$ by appropriately combining the sequence $(\epsilon_h)$, as discussed below.

Although the definition of $\epsilon_h$ stated in Eq.\ (\ref{eq:ehfirst}) is formally correct, it presents two fundamental challenges in the finite-data regime. First, empirical estimation provides access only to a finite subset of the parameters $\epsilon_h$, namely for $h \in \{0,\dots,h_{\max}\}$, rather than the infinite sequence required by the definition. At first sight, one might attempt to address this limitation by approximating $\epsilon$ with a truncated estimator $\tilde{\epsilon}$ defined in an analogous manner:

\begin{equation}
    \tilde{\epsilon} = \max_{h \in \{ 0, \dots, h_{\mathrm{max}} \}} \epsilon_h.
    \label{eq:final_epsilon}
\end{equation}

Unfortunately, a second limitation arises from the fact that taking a maximum over history lengths is not suitable in practice. The statistical reliability of the estimates $\epsilon_h$ decreases rapidly with increasing $h$, making the maximum overly sensitive to poorly estimated terms. This motivates the need for an alternative aggregation rule to replace the truncated supremum. We address this problem in in the further sections. 
Before doing so, we first detail in the next section the procedure used to estimate the individual parameters $\epsilon_h$.

\subsection{Calculating epsilons for given history length}
\label{sec:epsilonh}

In this section, we explain how the history-dependent quantities $\epsilon_h$ introduced in the previous section can be estimated from a finite binary sequence. While the definition of $\epsilon_h$ is given in terms of conditional probabilities of an idealized source, experimental access is limited to a single realization of the process. We therefore replace these probabilities with empirical frequencies of short substrings, yielding practical estimators that can be computed directly from data. This construction forms the core of our randomness-testing procedure and underlies the implementation of the SVTest software. We also discuss intrinsic limitations of this estimation, in particular, the loss of statistical reliability as the history length increases.

Let us begin by recalling the definition of the $\epsilon_h$ already stated in the previous section.
\begin{equation}
    \begin{split}
        &\epsilon_h \coloneqq \sup_{n \in \mathbb{N}} \quad \max_{s_{n-h+1}, \ldots, s_{n+1} \in \{ 0,1 \}}\\
    &\left| P(S_{n+1} = s_{n+1} | S_{n} = s_{n}, \dots, S_{n-h+1} = s_{n-h+1}) - \frac{1}{2} \right|.
    \end{split}
\end{equation}
We will start by rewriting the formula using the definition of conditional probability, obtaining that
\begin{equation}
    \begin{split}
        &\epsilon_h \coloneqq \sup_{n \in \mathbb{N}} \quad \max_{s_{n-h+1}, \ldots, s_{n+1} \in \{ 0,1 \}}\\
    &\left\lvert \frac{P(S_{n+1} = s_{n+1}, S_{n} = s_{n}, \dots, S_{n-h+1} = s_{n-h+1})}{P(S_{n} = s_{n}, \dots, S_{n-h+1} = s_{n-h+1})} - \frac{1}{2} \right \rvert.
    \end{split}
\end{equation}

The numerator and denominator in the above expression involve probabilities of random variables that are not directly accessible experimentally. In practice, only a single finite realization of the source is available. We therefore estimate these probabilities using empirical frequencies of sub-strings occurring in the observed sequence, as defined below.

\begin{equation}
    \label{eq:epsilon_h_aprox}
	   \tilde{\epsilon}_h \approx
	   \max_{v_{h+1}} \left \lvert \frac{\frac{|s|_{v_{h+1}}}{n-h}}{\frac{|s|_{v_{h+1}'}}{n-h+1}} - \frac{1}{2} \right \rvert\ ,
\end{equation}
where $s$ is a finite binary string produced by the device that we are testing, the maximum over $v_{h+1}$ is taken over all possible binary strings of length $h+1$, and $|s|_{v_{h+1}}$ counts the number of occurrences of the string $v_{h+1}$ in the sequence $s$.
Furthermore, $v'_{h+1}$ is the sub-string of the string $v_{h+1}$ obtained by removing the first bit.

The key idea is to replace inaccessible probabilities with empirical frequencies computed from the observed sequence. Since only a single finite realization is available, we treat each bit of the tested string $s$ as a potential ``current'' output and examine its recent history of length $h$. In this construction, $|s|_{v_{h+1}}$ counts the number of occurrences of a specific length-$(h+1)$ substring $v_{h+1}$, corresponding to a given history followed by a particular output bit, while $|s|_{v'_{h+1}}$ counts the number of occurrences of the associated history substring. By construction, $|s|_{v'_{h+1}} = 0$ implies $|s|_{v_{h+1}} = 0$; in this case, the corresponding contribution is defined to vanish, ensuring that the estimator is well defined for all sequences.

Additionally, for a sufficiently large sequence length $n$ and fixed, small history length $h$, this expression simplifies, yielding the final estimator for $\epsilon_h$:
\begin{equation}
\label{eq:eh}
	   \tilde{\epsilon}_h \mathrel{\mathop{\approx}\limits_{n\to \infty}}
	   \max_{v_{h+1}} \left \lvert \frac{|s|_{v_{h+1}}}{|s|_{v_{h+1}'}} - \frac{1}{2} \right \rvert.
\end{equation}

The estimation, defined by the above formula, is realized by the central part of our software. 

In the next section, we will address the problem of combining a just-defined sequence of $\tilde{\epsilon}_h$ into a single final $\epsilon$.

\begin{remark}[Quality of $\tilde{\epsilon}_h$ estimation]
\label{rem:eh}
    It is important to point out here two important issues:
    \begin{itemize}
        \item For tested string $s$ with fixed length $n$, the quality of estimation drops down as the history length $h$ is increasing. 
    The above follows because the number of occurrences of each binary string will decrease with $h$ since the maximization in Eq.\ (\ref{eq:eh}) is taken over all binary strings of length $h$.
    \item If the string $s$ will be too short to contain all binary substrings of length $h$, the estimation will become trivial.
    \end{itemize}
    Because of that, when estimating the final $\epsilon$, we can take into account only $\epsilon_h$ for some reasonably small values of $h$, and additionally, we should incorporate them with decreasing weights.
\end{remark}

\subsection{Epsilons combining functions}
\label{sec:ecombine}

The estimation procedure described in the previous section yields a family of parameters $\epsilon_h$, each quantifying deviations from uniformity at a fixed history length. However, practical applications of randomness amplification and security analysis require a single figure of merit characterizing the source as a whole. Combining the sequence ($\epsilon_h$) into a single effective parameter $\epsilon$ is therefore a central challenge, particularly in the finite-data regime where large history lengths are poorly estimated. In this section, we formalize the problem of constructing such an aggregated estimator and explain why the direct formulation of the SV definition cannot be used directly in practice, motivating the need for alternative combination rules.

In \cite{HeartSV}, we proposed to use the weighted average in the following form
\begin{equation}
    \label{eq:HeartSVepsilon}
    \tilde\epsilon(s_n) \coloneqq \frac{1}{w(\lfloor\log_2(n)\rfloor-1)}\sum_{i=0}^{\lfloor\log_2(n)\rfloor-1} \frac{\tilde\epsilon_i(s_n)}{(i+1)}
\end{equation}
with $w(h)=\sum_{i=0}^{h} \frac{1}{i+1}$.
Although using the above formula is a reasonable choice, we did not justify it in the previous publication. 

We therefore analyze in detail the problem of constructing a single effective parameter $\epsilon$, which is central to this work. While the estimation of $\epsilon_h$ for fixed history length is straightforward, combining these quantities into a unique final value is inherently nontrivial. As discussed in previous sections
, the supremum over history lengths---although appropriate in the infinite-data limit---is unsuitable for finite sequences. This necessitates the development of alternative aggregation methods tailored to the finite-data regime.

We will first formalize the notation of the epsilon-generating function $\Psi$ and show that the direct analog of the $\epsilon$-SV source condition is not applicable in a finite testing context. 
Next, in the following section, we will provide a list of axioms (properties) that the reasonable epsilon-generating function should fulfill. 
Next, we will give a few examples of such functions and justify our choice of one of them.

We will start with a formal definition of the function that transforms the sequence of epsilons into a single final epsilon. 
Let $s=(s_i)_{i=1}^n$ be the sequence of $n$ bits obtained from a source using the chosen discretization method and optionally some additional post-processing (see Section \ref{sec:discretization}).
Then, as we described in the previous section, our software is capable of estimating, from the source $s$, the sequence of epsilons $(\epsilon_h)_{h=0}^{h_{\mathrm{max}}}$, each for given history length $h$, where $h_{\mathrm{max}} \leq n - 1$.

In the context of our estimation procedure, the admissible history length can be further restricted. For a parameter $\epsilon_h$ to take a value strictly smaller than $1/2$, it is not sufficient to require $h < n$. If a particular history never occurs in the sequence, the corresponding estimate trivially equals $1/2$. Meaningful estimation therefore requires that all histories of length $h$ can occur at least once. This condition yields the tighter bound that was previously derived in \cite{HeartSV}:

\begin{lemma}[Lemma 1 from \cite{HeartSV}]
For the sequence of data $s_n$ of length $n$ , the maximal length of history $h$ that satisfies
$\epsilon$-SV condition with $\epsilon < 1/2$ satisfies
\begin{equation}
    h \leq \log(n) - 1.
\end{equation}
\end{lemma}
The proof of this lemma and a further justification can be found in \cite{HeartSV}.
We will use that bound in this work, adjusting it to the manuscript notation, obtaining:
\begin{equation}
h \leq h' := \lfloor \log_2 n \rfloor - 1.
\end{equation}

Consequently, it suffices to consider aggregation functions $\psi_h$ with $h \leq h'$.
Moreover, since estimates associated with larger history lengths suffer from reduced statistical reliability, aggregation rules should assign greater weight to parameters corresponding to smaller values of $h$.

Let us define the function
\begin{equation}
    \Psi : l^\infty \to \left[0, \frac{1}{2}\right].
\end{equation}
to be the most general form of epsilon combining function that takes sequence  $(\epsilon_i)_{i=0}^{\infty}$ and output single final epsilon value.
We will also define the sequence of functions 
\begin{equation}
    \Psi_h : l^\infty \to \left[0, \frac{1}{2}\right].
\end{equation}
in such a way that each $\Psi_h$ depends only on the first $h$ arguments, namely
\begin{equation}\label{eq:Psipsi}
    \Psi_h((\epsilon_i)_{i=0}^{\infty}) \coloneqq \psi_h((\epsilon_i)_{i=0}^{h})
\end{equation}
for some function $\psi_h : [0, 1/2]^{h+1} \to [0, 1/2]$.

In the ideal case, we would like to $\lim_{h \to \infty} \Psi_h = \Psi$ for some mode of convergence \cite{rudin1976principles,knopp1990theory}, such as pointwise or uniform, that we will not specify here and also that $\Psi$ would be one defined as supremum in Eq.\ (\ref{eq:sup}). 

Instead of imposing that convergence, we will construct a set of axioms for the function to fulfill. 

\subsection{Axiomatic approach}
\label{sec:axioms}

To resolve the ambiguity in aggregating the sequence of history-dependent parameters $\epsilon_h$ into a single effective measure of randomness, we adopt an axiomatic approach. Rather than selecting an ad hoc specific combination rule, we identify a set of minimal, physically motivated requirements that any reasonable estimator should satisfy. These axioms capture basic consistency, monotonicity, and robustness properties expected of a parameter intended to quantify weak random source from finite data. They provide a principled framework for comparing different aggregation schemes and serve as criteria for constructing estimators suitable for practical randomness testing.

Recall that a given data set of length $n$, epsilons with history length bigger than $n$ do not contain any information (they are always equal to 0.5).
Therefore, without loss of generality, we can restrict our analysis to the sequence of functions $\Psi_h$ for $h \leq n$.
Furthermore, based on Eq.\ (\ref{eq:Psipsi}), it is enough to investigate only functions $\psi_h$ that have a finite number of arguments. 

Having established mathematical language to work with, we will enumerate some necessary and desirable properties that the functions $\psi_h$ should satisfy.

We will state them as a set of axioms.
We use name axioms for this set of definitions describing desired properties to highlight their importance and state that they can be used to set the theoretical background for the estimation of the single epsilon.
Furthermore, we give four separate axioms even though they can be combined into a smaller set.
We see this distinction as useful since one can try to drop one or more of the axioms in the future to investigate a broader set of epsilon-combining functions. 

To shortly sum up the idea of axioms. There are desired properties that any reasonable epsilon-generating function should fulfill. They are based on both the general theoretical definition of $\epsilon$-SV Source and finite effects that come from our method of calculating a series of $\epsilon_h$ for finite tested data. The axioms will allow us to restrict the set of possible epsilon-generating functions. Furthermore, at a later stage, we will select specific types of epsilon-generating functions for our software that fulfill all the axioms.

\begin{axiom}[Zero condition]
\label{axiom:zero}
    Each function $\psi_h$ has to be equal to zero if the input is a sequence of only zeros
    \begin{equation}
        \psi_h(0, 0, \dots, 0) = 0.
    \end{equation}  
\end{axiom}

The first axiom makes sure that if all estimated $\epsilon_h$ are zeros, then the value of the final epsilon is also zero.
That condition is self-explanatory, although it should be considered together with the previous comments that we take into account only a finite number of $\epsilon_h$ up to some $h_\mathrm{max}$.
Additionally, one could strengthen this axiom by also imposing its reverse.

\begin{axiom}[Monotonicity]
\label{axiom:monotonicity}
    Each function $\psi_h$ has to be monotone for all of the variables, namely
    \begin{align}
        &\mathop\forall_{\epsilon_0, \dots, \epsilon_h}    \mathop\forall_{i \in \{0, \dots, h\}}  \mathop\forall_{\epsilon'_i > \epsilon_i}  \psi_h(\epsilon_0, \dots, \epsilon_{i-1}, \epsilon'_i, \epsilon_{i+1}, \dots, \epsilon_h) \nonumber\\
        &\geq \psi_h(\epsilon_0, \dots, \epsilon_{i-1}, \epsilon_i, \epsilon_{i+1}, \dots, \epsilon_h).
    \end{align}
\end{axiom}

The second axiom ensures that an increase in any $\epsilon_h$ cannot lead to a decrease in the final epsilon. 
The functions that do not fulfill this axiom are clearly against the spirit of the single-epsilon estimation.
Also, here, it is possible to strengthen this axiom by imposing strong inequality.

\begin{axiom}[Position influentiality]
\label{axiom:influence}
    \begin{equation}
    \begin{split}
        &\mathop\forall_{\epsilon_0, \dots, \epsilon_h}    \mathop\forall_{i,j \in \{0, \dots, h\}: i < j}  \mathop\forall_{\delta > 0} \\
        &\psi_h(\epsilon_0, \dots, \epsilon_{i-1}, \epsilon_i - \delta, \epsilon_{i+1}, \dots, \epsilon_{j-1}, \epsilon_j + \delta, \epsilon_{j+1}, \dots, \epsilon_h)
        \nonumber\\ 
        &\leq \psi_h(\epsilon_0, \dots, \epsilon_{i-1}, \epsilon_i, \epsilon_{i+1}, \dots, \epsilon_{j-1}, \epsilon_j, \epsilon_{j+1}, \dots, \epsilon_h)
    \end{split}
    \end{equation}
    where the quantifier for all $\delta$ means here for all $\delta$ that makes sense, i.e., such that added to or subtracted from specific epsilon do not extend $[0, 1/2]$ interval: $\epsilon_i -\delta \geq 0$ and $\epsilon_j +\delta \leq \frac12$ 
    The idea behind this axiom is that epsilon with a longer history should have less influence than previous ones (in the spirit of Remark \ref{rem:eh}).   
\end{axiom}

The third axiom, despite its complicated formulation, has a simple meaning. 
We demand here that $\epsilon_h$ with a smaller $h$ index should not have a smaller impact on the final epsilon than the one with a bigger $h$ index.
It is in the spirit of the already mentioned decrease in estimation accuracy when the history length increases.
Once again, one could make this axiom stronger by imposing strong inequality.

\begin{axiom}[Normalization]
\label{axiom:normalization}
    Let $a \in (0, 1/2]$ then 
    \begin{equation}
        \psi_h(a, a, \dots, a) = a.
    \end{equation}
\end{axiom}

Finally, the fourth axiom reflects the fact that if all of the appropriate $\epsilon_h$ are equal, then the final epsilon should have the same value.
Although, as we stated in the introduction, this axiom could be easily combined with the first one, we want to separate them for a few reasons.
Firstly, contrary to the first axiom, an attempt to strengthen the fourth axiom by imposing its inverse is not reasonable. 
Secondly, the fourth axiom is the least important one and is the first candidate to omit if one would like to allow a broader set of epsilon-combining functions.
Nevertheless, we strongly recommend not omitting it at all but rather replacing it with some weaker condition so that, for example, the range of the final epsilon is still correct.
Third, in the case of weighted averages, which we will focus on in the next section, we will see that each of the four axioms imposes different conditions on these averages.

\subsection{Weighted averages}
\label{sec:averages}

Having established a set of axioms constraining admissible aggregation rules, we now consider concrete estimators that satisfy these requirements. We focus on weighted averages of the history-dependent parameters $\epsilon_h$, which provide a simple, transparent, and computationally efficient means of combining information from different history lengths. By choosing weights that decrease with the history length, these estimators naturally reflect the diminishing statistical reliability of $\epsilon_h$ for large $h$. We analyze several such weighting schemes and identify the best-suited ones for practical randomness estimation and implementation in the SVTest software. 
Namely, we will consider weighted averages of the form
\begin{equation}
\label{eq:weighted}
    \epsilon = \sum_{h = 0}^{h_{\mathrm{max}}} w_h \epsilon_h
\end{equation}
where $w_h$ are some weights. 

In the Observation below, we provide sufficient conditions for the weights to fulfill Axioms  \ref{axiom:zero}--\ref{axiom:normalization} given in the previous section.

\begin{observation}
\label{obs:axioms}
    The weighted average from Eq.\ (\ref{eq:weighted}) with positive, non-increasing, and normalized weights fulfills Axioms \ref{axiom:zero}--\ref{axiom:normalization}. 
\end{observation}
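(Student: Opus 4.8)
The plan is to observe that the map in question is affine-linear in the parameters, and then to match each of the three hypotheses on the weights to exactly one axiom. Concretely, for a fixed history cap $h$ I would write $\psi_h(\epsilon_0,\dots,\epsilon_h)=\sum_{i=0}^{h} w_i\,\epsilon_i$, where the weights satisfy $w_i>0$ (positivity), $w_0\ge w_1\ge\cdots\ge w_h$ (non-increasing), and $\sum_{i=0}^{h} w_i=1$ (normalization). Before checking the axioms, I would first confirm the codomain declared in Eq.~(\ref{eq:Psipsi}): since each argument lies in $[0,1/2]$ and the weights form a convex combination, $\psi_h$ is a convex combination of points in $[0,1/2]$ and therefore lands in $[0,1/2]$, so $\psi_h:[0,1/2]^{h+1}\to[0,1/2]$ is well defined.

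Next I would dispatch the four axioms one at a time, each as a short consequence of a single hypothesis. For Axiom~\ref{axiom:zero}, linearity gives $\psi_h(0,\dots,0)=\sum_i w_i\cdot 0=0$ immediately. For Axiom~\ref{axiom:monotonicity}, replacing $\epsilon_i$ by $\epsilon_i'>\epsilon_i$ changes the value by $w_i(\epsilon_i'-\epsilon_i)\ge 0$, using only $w_i>0$. For Axiom~\ref{axiom:influence}, I would compute the effect of the mass transfer: moving from $(\epsilon_i,\epsilon_j)$ to $(\epsilon_i-\delta,\epsilon_j+\delta)$ with $i<j$ and $\delta>0$ alters the weighted sum by $-w_i\delta+w_j\delta=\delta\,(w_j-w_i)$, which is $\le 0$ precisely because the weights are non-increasing and $i<j$, yielding the required inequality. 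Finally, for Axiom~\ref{axiom:normalization}, $\psi_h(a,\dots,a)=a\sum_i w_i=a$ uses exactly normalization. This cleanly establishes the intended correspondence that positivity enforces monotonicity, the non-increasing property enforces position influentiality, and normalization enforces Axiom~\ref{axiom:normalization}, while the zero condition is automatic from linearity.

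I do not expect a genuine analytic obstacle here, since every step reduces to a one-line computation. The only point requiring care is a bookkeeping matter around the normalization convention: because $\Psi_h$ is required to reduce to $\psi_h$ depending only on the first $h{+}1$ arguments (Eq.~(\ref{eq:Psipsi})), I would state explicitly that the weight vector used at each history length is itself positive, non-increasing, and summing to one, so that Axiom~\ref{axiom:normalization} holds at every level rather than only for the full cap $h_{\max}$. The mild subtlety is thus purely notational---ensuring the truncated weights are renormalized consistently---and once that convention is pinned down, the four verifications above complete the argument.
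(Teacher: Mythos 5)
Your proof is correct and follows essentially the same route as the paper: the paper's own argument makes exactly the same correspondence (zero condition automatic from the form of the weighted average, positivity yielding monotonicity, non-increasing weights yielding position influentiality, and normalization yielding Axiom~\ref{axiom:normalization}), merely asserting these implications where you carry out the one-line computations explicitly. Your additional remarks on the codomain and per-level renormalization are sound bookkeeping refinements rather than a different approach.
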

\begin{proof}

    Axiom \ref{axiom:zero} is true for all weighted averages based on the definition given in Eq.\ (\ref{eq:weighted}).
    Additionally, if all of the weights are nonzero, the conversion of Axiom \ref{axiom:zero} is also true, nevertheless, we do not necessarily impose that. 
    Furthermore, to fulfill the other axioms, we need to impose some additional conditions on the weights.
    If all weights are positive, then the Axiom \ref{axiom:monotonicity} is fulfilled.
    If weights are a non-increasing sequence, then Axiom \ref{axiom:influence} is also fulfilled.
    Finally, if we only allow normalized averages, we will satisfy Axiom \ref{axiom:normalization}.
\end{proof}

Now, we will consider two types of such weighted averages, starting from the one presented in our previous work \cite{HeartSV}.
A natural generalization of the weights $1/(i+1)$ are their powers, i.e., weights $1/(i+1)^{k}$ for some fixed natural number $k$ (for details of implementation, see Section \ref{sec:svtest}). 
Another possible choice that is well justified is to take weights $\frac{1}{2^i}$. 
The latter approach is very reasonable, since the exponential weights correspond to the probability of occurring in history a string of length $i$ with the assumption of its uniform distribution.

Additionally, we should note that for a low number of epsilons that form the final epsilon, the powers of $(i+1)$ can yield a lower value than $2^i$ (indeed, the exponential function is larger asymptotically than the polynomial one, while for low values, the polynomial can be larger). 
However, we prefer the exponential weights because of the theoretical justification mentioned above.

It is also important to note that both types of weighted averages mentioned above (with appropriate normalization) fulfill the assumptions of Observation \ref{obs:axioms}. 

Now we are ready to formally define two types of weighted averages $\epsilon_{\mathrm{poly},k}$ and $\epsilon_{\mathrm{exp}}$ that we started introducing above.

\begin{equation}
    \label{eq:epsilon_poly}
 \epsilon_{\mathrm{poly},k} \coloneqq \frac{1}{\sum\limits_{i = 0}^{h_{\mathrm{max}}} (i +1)^{-k}} \sum_{h = 0}^{h_{\mathrm{max}}} \frac{\tilde{\epsilon}_h}{(h+1)^k}.
\end{equation}
The above definition generalizes the one given in \cite{HeartSV} to higher powers $k\geq 1$.

In the case of exponential weights $\frac{1}{2^i}$, we obtain
\begin{equation}
 \epsilon_{\mathrm{exp}} \coloneqq \frac{1}{\sum\limits_{i = 0}^{h_{\mathrm{max}}} 2^{-i}} \sum_{h = 0}^{h_{\mathrm{max}}} \frac{\tilde{\epsilon}_h}{2^h}.
\end{equation}
Here, the normalization can be expressed as the sum of a geometric series; hence, we end up with the following form: 
\begin{equation}
    \epsilon_{\mathrm{exp}} \coloneqq \frac{1}{2-2^{-h_\mathrm{max}+1}} \sum_{h = 0}^{h_{\mathrm{max}}} \frac{\tilde{\epsilon}_h}{2^h}.
    \label{eq:svtest_epsilon}
\end{equation}

It is not hard to notice that the mean number of occurrences of strings of length $i$ (the current bit and its history of length $h = i-1$) is equal to $k=\frac{n}{2^{i}}$. 
It decreases with increasing $i$ (for the maximal $i=\lfloor \log n \rfloor\,$ $k=1$ ). 
Hence, for large enough $i>i_0$, with high probability, some string does not appear and $\epsilon_{i}=\frac{1}{2}$. 
However, in such a case, the exponential weights of $\epsilon_{i}$ for $i > i_0$ are relatively small, which makes them irrelevant for the final value of the $\epsilon$.

The SVTest software implements both $\epsilon_{\mathrm{poly},k}$ and $\epsilon_{\mathrm{exp}}$; however, in experiments, due to the above reasons, we use the $\epsilon_{\mathrm{exp}}$ as in Eq.\ (\ref{eq:svtest_epsilon}).

\subsection{Discretization}
\label{sec:discretization}

As described in the previous Section \ref{sec:data}, obtained and preprocessed earth data are in the form of sequences of real numbers \(d = (d_i)_{i=0}^{l} \) where \( \forall_i d_i \in \mathbb{R} \). 
On the other hand, we wish to have a sequence of bits \( s = (s_i)_{i=0}^{n} \) where \( \forall_i s_i \in \{ 0, 1 \} \) that we can further test, use, or amplify its randomness. 
Generally, we could arbitrarily choose the length parameter \( n \) and use any deterministic function \( \delta : d \to s \).
However, for practicality and simplicity reasons, we will limit ourselves to the case where \( n = l \) (the length of row data is equal to the length of the preprocessed data) and the discretization function is ``local''.  
By local, we mean that each bit \( s_i \) is computed in the same way and depends only on some small neighborhood of input numbers \( \{ d_{i-r}, \dots, d_{i+r} \} \). 
The main idea of this restriction is that the discretization should create \(i\)-th bit in the sequence directly from \(i\)-th real number or from the relation between \(i\)-th real number and its up to $r$ predecessors and successors.

We should mention here that a careful reader could notice that, in fact, the discretizations \ref{disc:2} and \ref{disc:3} are not ``local'' in the strict sense. 
Nevertheless, averages used in these discretizations can be seen as metaproperties of the source, not as direct dependence on all bits. 
Furthermore, when using some type of source regularly, we could try to estimate these averages in advance and treat them as a constant value for future runs. 
For example, our results for discretization \ref{disc:2} indicate that the average bit value is extremely close to zero. 
Therefore, we could assume that it is, in fact, zero in all future tests for the same source and the same kind of measurement apparatus. 
In the case of discretization \ref{disc:2}, it would make it equal to discretization \ref{disc:1}, which we observe in our results. 
In conclusion, in that broad sense, all methods of discretization presented in our manuscript can be seen as local.

In our experiments, we use several discretization methods described in the Appendix \ref{sec:discretization_details}, delineated as follows:

The first discretization attributes $0$ to a real number $d_i$ when $d_i>0$ and $1$ otherwise. Hence, it depends on the sign (see Definition \ref{disc:1}). 
The second discretization maps $d_i$ to $0$ when $d_i \geq \mathbb{E}[d]$, where
$\mathbb{E}[d] = \frac1n\sum_i d_i$, and maps to $1$ if it is not the case.  Hence, it depends on the average value over the whole sequence (see Definition \ref{disc:2}). 
The third discretization distinguishes from the second by replacing the condition defining the value of the output bit, namely $|d_i|\geq \mathbb{E}[|d|]$ where $\mathbb{E}[|d|]=\frac1n \sum_i |d_i|$ (see Definition \ref{disc:3}). As we will see this change significantly affects the value of final $\epsilon$.
The fourth discretization maps $d_i$ to $0$ iff $d_{i+1}\geq d_i$, that is
when the values increase from step $i$ to step $i+1$, and $1$ else (see Definition \ref{disc:4}). 
Finally, the fifth discretization maps $d_i$ to $0$ iff $|d_i|\geq |d_{i+1}|$, and $1$ else, i.e., like in the case of the fourth, but up to modulus. (see Definition \ref{disc:5}).

To finish this section, we will briefly discuss the case of using a min-entropy source instead of an SV source.
Min-entropy source is another commonly used definition of a weak random source. 
It has less structure than an SV source, and additionally, every SV source is also a min-entropy source, but not necessarily the other way around. 
Although in most applications, the approach to these two kinds of sources is fundamentally different, in our case of randomness estimation, we do not need any important modifications.
We will summarize this in the following remark, preceded by an observation of multiple runs of the device.

\begin{observation}[Indistingushability of multiple run of the device]
\label{obs:multiple_runs} There is no difference between $k$ subsequent runs of the device, each generating $n$ bits and one long run generating $kn$ bits.
\end{observation}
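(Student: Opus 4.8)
The plan is to reduce the claim to the defining structure of the source, namely that Definition~\ref{def:sv} constrains each conditional probability $P(S_{n+1}=s_{n+1}\mid S_n=s_n,\dots,S_0=s_0,E)$ for \emph{every} index $n$ and \emph{every} admissible history, where $E$ collects all random variables in the past light cone of $S_{n+1}$. The essential point I would exploit is that $E$ already absorbs arbitrary past information: whether a given bit is preceded by bits from the same run or by bits from an earlier run, the causally prior bits lie in its past light cone and are therefore subsumed into $E$. Consequently the SV constraint attached to any single bit does not ``know'' whether the generation was segmented into $k$ runs or produced in one pass, so the defining inequality is insensitive to how the stream is partitioned.

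First I would establish the forward direction: given $k$ runs of length $n$, each satisfying the $\epsilon$--SV condition, I would concatenate them in temporal order and verify that the resulting length-$kn$ string again satisfies the $\epsilon$--SV condition with the \emph{same} $\epsilon$. For a bit interior to some run this is immediate; for the first bit of runs $2,\dots,k$ the only change is that the bits of the preceding runs, previously hidden inside $E$, may now be written explicitly in the conditioning, which leaves the bound unchanged because it is the same conditional distribution being described under a different partition of the conditioning variables. Next I would establish the reverse direction: given one run of length $kn$ satisfying the $\epsilon$--SV condition, I would cut it into $k$ consecutive blocks of length $n$ and observe that each block inherits the bound, since Definition~\ref{def:sv} is a universal statement over all $n$ and all histories, and any explicit history dropped at a block boundary is simply reabsorbed into $E$. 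Together these two directions show that the two generation schemes yield sequences governed by identical SV parameters, hence indistinguishable at the level of the model.

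Finally, I would connect this to the concrete estimator of Eq.~(\ref{eq:eh}). Because $\tilde\epsilon_h$ is built from the substring counts $|s|_{v_{h+1}}$, these counts are additive across runs up to the finitely many substrings that straddle the $k-1$ junctions between runs, of which there are at most $(k-1)h$; this is negligible against the $\sim kn$ total occurrences whenever $n\gg h$, which is precisely the only regime in which $\tilde\epsilon_h$ is meaningfully estimated, per Remark~\ref{rem:eh}. The main obstacle I anticipate is controlling exactly these junction effects: I must make explicit that the straddling substrings neither manufacture spurious conditional dependencies nor bias $\tilde\epsilon_h$ beyond a correction vanishing as $n\to\infty$, so that the empirical estimate computed on one long run agrees with the estimate computed run-by-run and then aggregated by the weighted average of Eq.~(\ref{eq:svtest_epsilon}). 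The same reasoning carries over to the min-entropy formulation, which relies only on the probabilities of realized outcomes and is likewise invariant under splitting and concatenation.
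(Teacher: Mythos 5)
The paper offers no proof of this observation at all: it is stated as a bare assertion, treated as self-evident from the structure of the SV model, and invoked only once, to argue that one-shot min-entropy sources can be estimated as if their multiple runs formed a single long run. Your proposal therefore necessarily takes a different (more explicit) route, and it is essentially sound. Your central point---that the side information $E$ in Definition~\ref{def:sv} already contains every causally prior bit, so moving bits of earlier runs between ``explicit history'' and ``$E$'' merely repartitions the same conditioning information and cannot change the conditional distribution being bounded---is precisely the reason the paper can afford to leave the claim unproved, and your two directions (concatenating $k$ compliant runs preserves $\epsilon$; cutting one compliant run of $kn$ bits into $k$ blocks preserves it too) make that reasoning rigorous. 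Your estimator-level addendum goes beyond what the paper's usage of the observation requires, but the counting is correct: occurrences of length-$(h+1)$ substrings are additive across runs up to at most $(k-1)h$ positions straddling the junctions, which is negligible against roughly $kn$ positions whenever $n \gg h$, the only regime where $\tilde\epsilon_h$ is reliable per Remark~\ref{rem:eh}. One caveat: your closing phrase about agreement with ``the estimate computed run-by-run and then aggregated by the weighted average of Eq.~(\ref{eq:svtest_epsilon})'' conflates two distinct aggregations---that equation averages over history lengths $h$, not over runs, and the maximum over $v_{h+1}$ in Eq.~(\ref{eq:eh}) of per-run count ratios is generally not the maximum of the pooled ratios. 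The statement that is actually true (and that your additivity argument supports) is that pooling the raw substring counts across runs and then forming $\tilde\epsilon_h$ agrees, up to the junction correction, with computing $\tilde\epsilon_h$ directly on the concatenated string; phrased that way, your argument is complete.
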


\begin{remark}[Estimating randomness of the min-entropy sources ($\mathrm{H}_{\min}$)]
    In the context of min-entropy sources, we can distinguish two types, the standard one-shot and the so-called block min-entropy source. 
    The block min-entropy source can be seen as a generalization of the SV source where we do not have separate single bits with history but a whole small sequence of bits with other sequences of bits as its history.
    In this case, our mathematical formulation and, through this, our software, can be modified to count frequencies of whole $k$ bit sequences where $k$ is the size of a block in the block min-entropy source.
    Although this modification is not currently implemented, it only requires changes in the frequency-counting part and does not heavily influence the rest of the software when various epsilons are counted.
    We should additionally mention here that with the increase in the block size, the required number of bits for a reasonable estimation of some history lengths increases drastically.
    Finally, when working with one-shot min-entropy sources, the estimation method cannot be different from the one described above for block min-entropy sources (see Observation \ref{obs:multiple_runs}). 
\end{remark}

\section{SVTest software}
\label{sec:svtest}

This section summarizes our SVTest Software’s architecture; the user can find further details in the ``README.md'' file of our SVTest Software \cite{SVTest}.

The program consists of three main stages: in the first stage, it uses two programs to download seismic data from accessible sources \cite{irismap} and outputs a .mseed file; in the second stage, a program transforms the data from ``.mseed'' format to ``.ascii'' format, and finally, in the third stage a program estimates the randomness parameters ($\epsilon_{h}$ and final $\epsilon$) from the ``.ascii'' input.

The first program of the first stage takes a list of seismic stations written in a ``.txt'' file, transforms it into the form required by the second program, and writes it in another ``.txt'' file. 
After this step, the software executes the second program, in which the user enters parameters determining the downloaded data and saves it in a ``.mseed'' file, the standard format for exchanging seismic
data. 

Later, in the second stage, we transform the downloaded data to a file more suitable for randomness source modeling; we proceed in two steps: We create separate ``.ascii'' files for every channel of every selected station and then aggregate the whole set of files into one ``.ascii'' file.

In the third stage, the program written in C language inputs data from the previous ``.ascii'' file and calculates the $\epsilon$ parameter of the potential SV source. 
This program provides the user with a few clear options to choose from:
\begin{enumerate}[label=(\alph*)]
    \item discretization method (see Section \ref{sec:discretization}),
    \item  the method of counting $\epsilon$ (see Section \ref{sec:averages}),
    \item history length (see Section \ref{sec:epsilonh}), and
    \item the number of lines taken from the final ``.ascii'' file, which is  equivalent to setting the number of initial seed bits.
\end{enumerate}

It is important to notice that the main part of our software, referred to above as the third stage, is universal and can be used with any weak random source, not only the one obtained from seismic events.
Furthermore, in the current version, it not only accepts data file in the text format but also binary ones.

In the next section we will describe in more detail the part of the program associated with the third stage since it is one of the main results of this work.

\subsection{Core of the SVTest program}

The goal of this program is to first estimate the sequence of values $\epsilon_h$ for the given history length $h$ and then estimate from them the final value of $\epsilon$.
The whole program is based on the mathematical previously discussed  background.

The first step is to load all input data into memory to allow faster computations.
Although this version does not support live streaming of data as an input, such a use case can be resolved by storing streamed data and dividing it into appropriate big parts to use in the software. 
If these parts are big, then the estimation error should be negligible.

The second part uses one of the discretization methods to obtain a bit sequence from the real number sequence used as the input data.
Our software implements a few different methods of discretization described in 
Appendix \ref{sec:discretization_details}.
Furthermore, each discretization is implemented as a separate function, so it is easy to modify it or create a new one without the need to change the other parts of the software. 
It could be useful if one would like to use the software to test some other source that requires some specific form of discretization.
Finally, if the data is already in the binary format, discretization could be omitted. 

The next part is the most crucial one:
We estimate the sequence of values $\epsilon_h$ for the given history length $h$ 
(according to the formula given in Eq.\ (\ref{eq:epsilon_h_aprox})).
The above is done by calculating the frequencies of appropriate substrings.
Since this part is the most computationally demanding, it is highly optimized by calculating each frequency only once (even if it is needed in more than one $\epsilon_h$).
Furthermore, we use low-level bit operations on substrings rather than calculating each substring frequency separately to improve efficiency even more. 

The last part is responsible for calculating the final value of $\epsilon$.
We implemented two families of weighted averages.
Namely, exponential average (see Eq.\ (\ref{eq:epsilon_poly})) and polynomial average (see Eq.\ (\ref{eq:svtest_epsilon})).
Here, the calculation is performed in a separate function, so it is easy to modify it or create a new one.
Therefore, any function that is coherent with the form described in Section \ref{sec:ecombine} can be used, although we recommend one that fulfills at least part of the axioms.

\section{Results}
\label{sec:results}

In this section, we will present numerical experiments results obtained from our software.
We will divide it into three parts depending on the kind of the source.
First, in Section \ref{sec:resultsQRNG}, we describe the results of the testing of a high-quality quantum random number generator.
Second, in Section \ref{sec:resultsNoise}, we present results for seismic noise.
Third, in Section \ref{sec:resultsEvents}, we provide results for earthquakes.
Finally, in Section \ref{sec:resultcomparision}, we discuss differences in the results obtained and their possible applications.

\subsection{Results for data from quantum random number generator}
\label{sec:resultsQRNG}

The first entropy source that has been tested was a self-certifying quantum random number generator SeQRNG provided by SeQure Quantum \cite{SeQureQuantum}. 
Since this device has inbuilt entropy estimation and randomness extraction procedures, its outcome has a guaranteed rate of 1 bit of min-entropy for every bit of outcome produced. 
Therefore, it can be considered a prefect randomness source and serves here as a benchmark.
\paragraph{Physical system.}
The SeQRNG is based on a quantum optical interferometer with active phase modulation. 
In each round, the outputs of two single-photon detectors
$D_0$ and $D_1$ are recorded. The random and unpredictable nature of
quantum measurement outcomes at the two detectors provides the raw
source of randomness.
\paragraph{Self-certification.}
A key feature of the device is that randomness generation and
entropy certification run concurrently. Two modes of operation are
randomly interleaved: \emph{randomness generation rounds}, in which
the detector outcomes are intrinsically unpredictable and form the
raw bit string, and \emph{self-certification rounds}, in which a
deterministic output is expected. Any deviation from the expected
outcome in the certification rounds signals a reduction in the entropy
of the randomness generation rounds, whether due to hardware
imperfections or adversarial interference. A  lower bound on
the min-entropy $H_{\min}$ of the raw output is continuously
computed from the observed statistics of both round types, under a
worst-case assumption that all deviations from ideal behavior are
attributed to an adversary.
\paragraph{Entropy extraction and security of source.}
By ``security of source'' we mean this continuously monitored,
certified lower bound on the min entropy of the raw QRNG data, established not by one-time
factory calibration but by the live self-certification protocol. The raw
bit string is post-processed using a NIST-vetted randomness
extraction protocol (HMAC with SHA-256 hashing)~\cite{NIST_FIPS_198,
NIST_FIPS_180}, which condenses the raw output into a shorter string
in which every output bit carries exactly 1 bit of
min-entropy. 

\paragraph{Use as a benchmark and the claim of ``perfect'' source.}
Because the raw min entropy estimation and subsequent extraction step guarantees 
$H_{\min} = 1$ bit per output
bit, the conditioned output saturates the maximum possible min-entropy
for a binary source. In the Santha--Vazirani framework used in this
paper, this corresponds to $\epsilon = 0$.

Table \ref{tab:qrng} presents the results of the SVTest software obtained from data from the quantum random number generator. 
The final epsilon values suggest that the generator is of very high quality and can be used directly in various protocols. 
Furthermore, epsilons for different history lengths $h$ clearly illustrate that the quality of estimation decreases with $h$ as stated in Remark \ref{rem:eh}.
Therefore, we stress here that with increasing history length our estimation of the security of source is getting less precise, not the security of source itself.

\subsection{Results for seismic noise}
\label{sec:resultsNoise}

Let us now discuss the efficiency of the seismic apparatus in generating partially random bits. 
This is an important problem since the previous approach via heartbeat suffered from low rates for natural reasons. 
The size of raw data downloaded from a particular apparatus may heavily depend on its location. 
We thus focus on the average size of filtered data from a package of raw data of fixed size.

Our detailed analysis first focuses on the noise, and the results are organized as follows. 
Tables \ref{tab:bits1Mb}--\ref{tab:bits1.5Gb} show values of  $\tilde{\epsilon}_h$ approximated as in Eq.\ (\ref{eq:eh}) for every discretization (see Appendix \ref{sec:discretization_details}). 
Each of the tables refers to different amounts of preprocessed data: $1$~Mb, $10$~Mb, $100$~Mb, $1$~Gb, and $1.5$~Gb, respectively. 
The symbol of three vertical dots means that for these values $\tilde{\epsilon}_h=\frac{1}{2}$, so in the tables $h$ ranges from $0$ to the minimal value of $h$ for which $\tilde{\epsilon}_h=\frac{1}{2}$. 
In the last row, there are values of $\epsilon$ defined in Eq.\ (\ref{eq:weighted}).
To make the analysis of data easier, the values from the last row are plotted both as a function of the number of preprocessed bits (Figures \ref{fig:bitsNo3} and \ref{fig:bitsOnly3}) and discretization types (Figures \ref{fig:discrNo3} and \ref{fig:discrOnly3}). 
Additionally, in Table \ref{tab:disc5} there are values of $\tilde{\epsilon}_h$  and $\epsilon$ for every number of preprocessed bits for discretization $2$ and Table \ref{tab:finalepsilon} gathers all values of  $\epsilon$ from Tables \ref{tab:bits1Mb}--\ref{tab:bits1.5Gb} for clear comparison. 

\begin{figure}[htbp]
    \centering
    \includegraphics[width=1\linewidth]{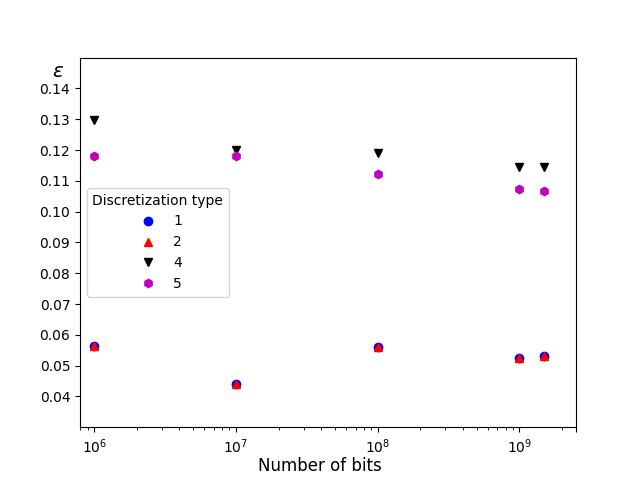}
    \caption{
        Various values of $\epsilon$ in terms of the initial number of seed bits for the given type of discretization. 
        The third discretization is beyond the scale and is presented in figure \ref{fig:bitsOnly3}.
    }
    \label{fig:bitsNo3}
\end{figure}

\begin{figure}[htbp]
    \centering
    \includegraphics[width=1\linewidth]{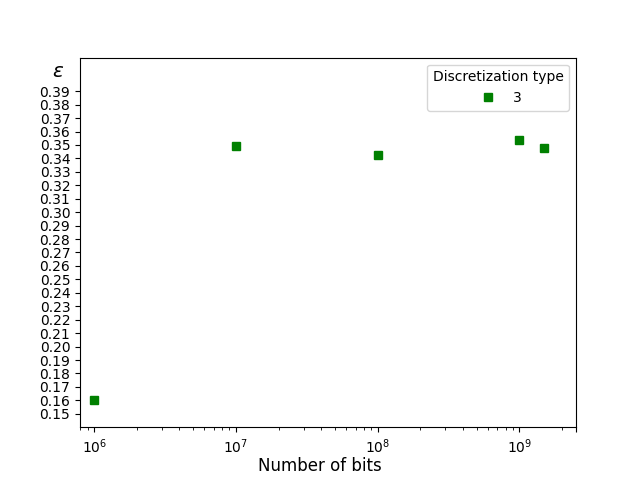}
    \caption{
        Various values of $\epsilon$ in terms of the initial number of seed bits for the third type of discretization.
    }
    \label{fig:bitsOnly3}
\end{figure}

\begin{figure}[htbp]
    \centering
    \includegraphics[width=1\linewidth]{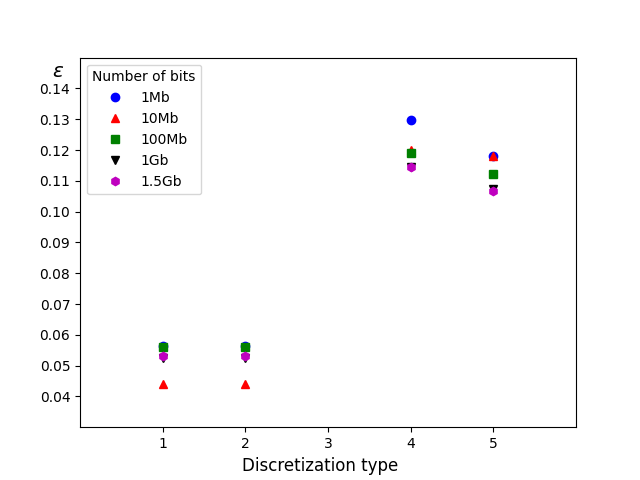}
    \caption{
        Various $\epsilon$ in terms of discretization for a given initial number of seed bits. 
        The third discretization is beyond the scale and has been presented in Figure \ref{fig:discrOnly3}.
    }
    \label{fig:discrNo3}
\end{figure}

\begin{figure}[htbp]
    \centering
    \includegraphics[width=1\linewidth]{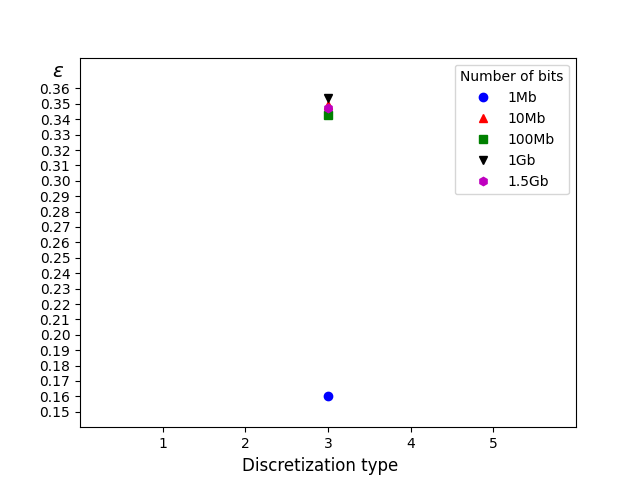}
    \caption{
        Various values of $\epsilon$ in terms of the third discretization for a given initial number of seed bits.
    }
    \label{fig:discrOnly3}
\end{figure}

The first study of values from Tables \ref{tab:bits1Mb}--\ref{tab:bits1.5Gb} reveals the conclusion that the choice of discretization has meaningful influence on the final value of $\tilde{\epsilon}_h$. It is easy to notice that the values for the first and the second discretization are optimal having one order of magnitude less than the others. Moreover, there is a significant difference in values of $\tilde{\epsilon}_h$ for the third discretization comparing them to the others. Their values are greater, although the value of $\tilde{\epsilon}_h$ gains $\frac{1}{2}$ for longer history length $h$. 
This concludes that applying the third discretization makes randomness weaker. Explanation of this fact is the following.
As one can see in Appendix \ref{sec:discretization_details}, bits are assigned according to the relation of the absolute value of the real numbers to the average value of the absolute values of the sequence. 
However, in such a way, some part of the information is lost, which causes the observed effect.

The next observation is that values of $\tilde{\epsilon}_h$  for the first and the second discretization are very close to each other. 
Notice that in the first discretization, the reference point of the values of the binary sequence is $0$, and in the second discretization, it is the mean value of the sequence of real numbers from the input file (see Appendix \ref{sec:discretization_details}). 
The method used to preprocess the seismic data (based on Fourier's transformation) causes this real numbers sequence to oscillate around $0$. 
Hence, the mean value is near $0$, which makes, in consequence, both discretizations' outputs practically the same.

\subsection{Results for seismic events}
\label{sec:resultsEvents}

The last Table \ref{tab:events} shows results for data obtained from deep and strong earthquakes that we refer to as events to distinguish them from the noise ones described above. 

As shown in the table, the estimated epsilon value ($\epsilon$) differed significantly across discretization type, signal type, and time-window type. The values range from reasonably small (around 0.04), to quite large (around 0.3). By analyzing these values, we can draw a few important conclusions. First, the signal response (velocity, acceleration, and displacement) and the time window type have some impact on epsilon, suggesting that different signal windows and preprocessing methods can contain different amounts of randomness. Second, the discretization type has a huge impact on the epsilon value, showing variation even to one order of magnitude at the same time window and signal response.
Furthermore, no single discretization is the best one.
Rather, the discretization method should be chosen for the signal response.
It also suggests that other discretization types should be developed and checked in the future.
Third, we can conclude that when the seismic event is carefully selected and appropriate discretization method is used, we can treat it as weak random source with quiet reasonable parameter epsilon.
The randomness amplification protocols, like \cite{Gallego2013} or \cite{Brando2016},  are suitable for weak random source with $\epsilon <\frac{1}{2}$. However, the closer to $0$ is this value the more efficient they are. Hence, the SV-Source with epsilon $0.04$, because of its optimal value among the others, can be used as the protocols input.

\subsection{Results comparison}
\label{sec:resultcomparision}

The value of $\epsilon$ is the reference point in the comparison of sources of randomness. Simply, one source of randomness is better than the other source, if $\epsilon$ value calculated for it is smaller comparing to the $\epsilon$ value calculated for the other source. The results comparison has been made for all three sources presented in the previous section. 
Due to the model, the quantum random generator is almost a perfect source of randomness, much better than seismic events. The values of $\epsilon$ for the former are a few orders of magnitude lower than for the latter (see Figure \ref{fig:QRNG_noise_comparison}). Hence, QRNG can be treated as a benchmark of the quality of our randomness estimation. Moreover, both types of sources could be useful in different scenarios: quantum number generator, because of its high quality, can be used as a direct source of randomness in contrast to seismic generators, which can be a potential input for quantum randomness amplification protocols. According to a given protocol's $\epsilon$ value requirement one can choose type of discretization and number of bits for which it is the most efficient.

The analysis and comparison of the results from earthquakes and noise do not reveal a difference in order of magnitude. Nevertheless, in terms of discretization types, the noise is better in the case of the first and second discretization for all types of signals and time windows used for earthquakes. The opposite situation concerns the third discretization: earthquakes are much better. For the remaining two discretizations (4 and 5) it depends on the type of signal time window (see Tables \ref{tab:events} and \ref{tab:bits100Mb}). Still, the best values of $\epsilon$ from earthquakes are better than the noise for these discretizations (see Figure \ref{fig:compNoiseNatural}). It is important to stress that the direct cause of influence on final $\epsilon$ value in relation to  chosen discretization  for seismic events was not in the area of the research. We leave it rather as an open question for possible future investigation.
Finally, when comparing seismic noise and seismic events, the estimated epsilon is only one factor. Another important aspect is the achievable rate of bit generation. It is hard to calculate, but intuitively, this rate is bigger for the noise. The argument for this claiming is the following: practically, the noise data is  gathered continuously, regardless of the order of magnitude of surface movement. On the other hand, data from natural events are more demanding; they concern earthquakes of specified magnitude and geological location, so are collected less often.

\begin{figure}[htbp]
    \centering
    \includegraphics[width=1\linewidth]{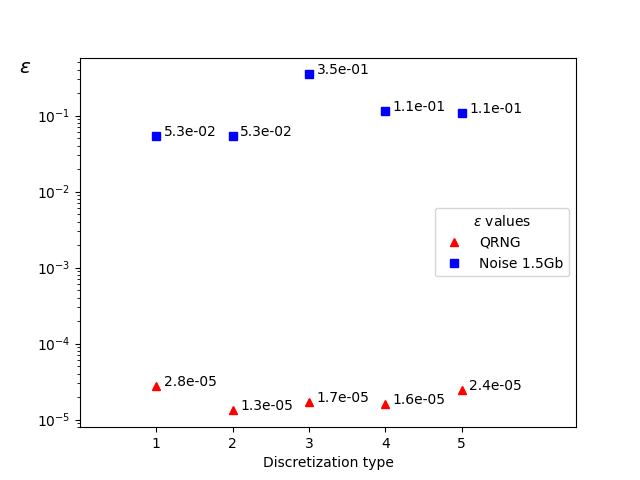}
    \caption{Comparison of values of $\epsilon$ in terms of discretization for QRNG and noise. The number of seed bits for noise seismic events is 1.5 Gb.
    }
    \label{fig:QRNG_noise_comparison}
\end{figure}

\begin{figure}[htbp]
    \centering
    \includegraphics[width=1\linewidth]{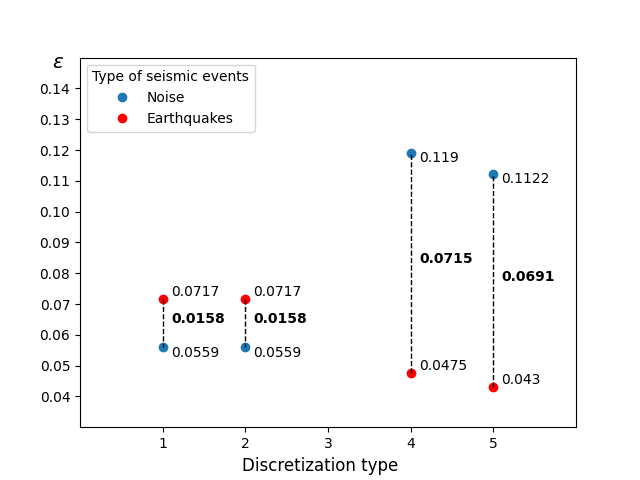}
    \caption{
        Comparison of values of $\epsilon$ in terms of discretization for earthquakes and noise. The number of seed bits for noise seismic events is 100Mb. For the earthquakes the smallest value has been chosen for a given discretization (see Table \ref{tab:events}). The length of line segment (written in bold) represents the difference in values of $\epsilon$ for both types of seismic events.  The third discretization is beyond the scale and has the following $\epsilon$ values  $0.3429$ for noise and $0.2354$ for earthquakes  giving $0.10754$ of difference.
    }
    \label{fig:compNoiseNatural}
\end{figure}


\section{Resource aware aspect}
\label{sec:avarenes}

We should also note that using the seismic apparatus is a very resource-aware and safe option for the environment in various ways. 

Firstly, it can use existing devices and technologies without developing and manufacturing completely new hardware. This approach reduces both the financial and environmental costs of weak randomness preparation. 

In addition, the devices used can serve dual purposes simultaneously, both as weak randomness sources and in standard seismic applications such as earthquake research or early warning systems. 
This synergy could benefit both fields. 
The places that have already implemented seismic devices could easily get access to the weak randomness. 
Furthermore, new seismic devices installed mainly for cryptography purposes could simultaneously be used as early warning systems that can benefit the general local population. 

Finally, contrary to some popularly used sources of randomness based on radioactive decay, this solution does not use potentially dangerous or restricted materials. 
This also reduces the environmental impact and inherent risk associated with the acquisition, handling, and disposal of radioactive or dangerous materials.

Last but not least, verifying whether a given weak random source can be modeled as an SV source has energy-saving aspects, in the spirit of the recently founded Quantum Energy Initiative \cite{QEI}.
That is, it is more common for providers of random number generators to assure the quality of the generated output based on a weaker model of the source, only assuming that it is the so-called $\mathrm{H}_{\min}$
source. 
The $\mathrm{H}_{\min}$ source is unstructured, as opposed to the SV source, which satisfies the so-called Generalized Chernoff bound \cite{Ramanathan2016}. 
Some of these providers, by using SVTest, might observe a higher generated randomness rate, or equivalently, less power use for the same amount of generated randomness than assuming the source has $\mathrm{H}_{\min}$ source structure. 
Investigating this expected phenomenon can be the topic of future research.

\section{Discussion}
\label{sec:discussion}

In this work, we address the problem of characterizing weak randomness in physical data by adopting the framework of Santha--Vazirani (SV) sources. In particular we have focused on verifying if a given source can be treated as the so-called SV source, parametrized by $\epsilon \in [0,1/2]$. 

In this article, we have contributed to the field of weak randomness analysis in three ways.
We first have developed the mathematical framework for estimating the quality of weak random source.
Namely we have introduced a principles-based method to estimate the SV parameter $\epsilon$, which quantifies the deviation of a source from ideal randomness. Indeed, since $\epsilon$ cannot be inferred uniquely from finite data, we first identify a set of natural axioms that any data-driven randomness estimation procedure should satisfy.

Second, building on this framework, we have developed \texttt{SVTest}, a software tool that estimates the SV parameter $\epsilon$ directly from experimental data. \texttt{SVTest} operates on numerical time series or binary data and can be readily adapted to other data formats. We release \texttt{SVTest} as open-source software~\cite{SVTest}, providing a practical and reproducible method for certifying weak randomness in physical sources.

Third, we apply \texttt{SVTest} to two classes of experimental data. As a benchmark, we first analyze a binary sequence generated by a commercial quantum random number generator provided by SeQure Quantum \cite{SeQureQuantum}. The extracted values of $\epsilon$ confirm that this device operates close to an ideal SV source. We then turn to seismic data, considering both earthquake signals and ambient seismic noise. Our analysis shows that seismic phenomena exhibit measurable, nontrivial SV randomness, motivating their study as physical weak random sources.

Our results indicate that seismic phenomena constitute public randomness sources that are plausibly not controllable by an adversary. Although the geographical regions where earthquakes are more likely to occur are known, the resulting seismic waveforms depend on a complex combination of source mechanisms, propagation effects, and local geological structures. Moreover, the earthquakes analyzed here originate several kilometers below the surface, placing them beyond any realistic possibility of direct manipulation with current technology. 

If satisfying the above conditions, deep seismic phenomena would provide the first concrete randomness sources to feed the most advanced techniques for \emph{amplifying and privatizing} randomness using quantum devices \cite{RotemKessler}.
We answer positively to the above by proposing seismic randomness sources of sufficient depth and certifying their suitability as SV sources. 
We achieve the above result by demonstrating that $\epsilon$ is distinctly smaller than 0.5 for the meaningful output bit sequences.

The case of seismic noise is, however, slightly different by its natures.
Since it is a combination of an enormous number of small signals from various origins, even if we can extract a little bit of information from the \cite[e.g][]{Ojeda2021, Lecocq2020}, as a whole, the sources of the noise is completely unknown \cite{Udias2014,stein2009,shearer2019}. We therefore do not claim seismic noise to be cryptographically secure, but instead treat it as a weak random source with environmental correlations bounded by~$\epsilon$ under realistic threat assumptions.

Finally, we will discuss possible future directions and open questions. 

We expect that an additional application of \texttt{SVTest} would be the benchmarking of sources that are declared to be random. 
Such sources are expected to operate close to the ideal case of an SV source with $\epsilon=0$. 
Since the set of $\epsilon$-SV sources forms a polytope generated by suitably permuted Bernoulli distributions \cite{Grudka2014}, one may test whether a given device can be certified as an $\epsilon_0$-SV source for some small $\epsilon_0$. 
Because of that, it is plausible that \texttt{SVTest} software could provide a practical tool for performing this certification directly from observed data, although this approach requires some further research.

A natural open question concerns the relation between the SV parameter $\epsilon$ and standard statistical indicators of randomness, such as $P$-values used in standard test suites (see \cite{NIST}). While $P$-values play a central role in empirical randomness testing, the parameter $\epsilon$ is the relevant quantity for applications such as classical and quantum randomness amplification. Establishing a quantitative connection between these two approaches remains an interesting direction for future work.

It would also be of interest to extend our approach beyond SV sources to other models of weak random sources, in particular min-entropy sources (for definitions, see, for example \cite{Bouda2014}). Min-entropy provides an alternative and widely used quantification of unpredictability, and understanding how SV certification methods relate to or can be adapted for min-entropy sources could further broaden the applicability of our framework.

From a practical perspective, future developments of \texttt{SVTest} software may include improved memory efficiency and the ability to analyze data streams in real time. Such extensions would further enhance the applicability of the software to experimental settings where continuous data acquisition and testing are required.

In summary, we have introduced a principled and operational framework for certifying weak randomness in physical data and demonstrated its effectiveness through the analysis of seismic signals. By combining an axiomatic approach to SV sources with a practical, open-source implementation, we show that deep seismic phenomena contain certifiable randomness suitable for quantum randomness amplification. Our results establish seismic data as a previously unexplored class of physical randomness sources and provide a concrete methodology for identifying and validating new resources for cryptographic and quantum-information applications.


\begin{acknowledgments}
    The authors would like to thank Paweł Horodecki	for very useful discussions and comments.
    The authors would like to thank Marcin Pawłowski and SeQure Quantum for providing data from quantum random number generator.
	RS acknowledges financial support by the Foundation for Polish Science through TEAM-NET project (contract no.\ POIR.04.04.00-00-17C1/18-00) and funding from the European Union’s Horizon Europe research and innovation programme under the project ``Quantum Secure Networks Partnership'' (QSNP, grant agreement No 101114043).
    KH acknowledges National Science Centre, Poland, grant Opus 25 No. UMO-2023/49/B/ST2/02468.
    CMY acknowledges the Fondecyt postdoctoral project 3220307.
	We acknowledge partial support by the Foundation for Polish Science (IRAP project, ICTQT, contract no.\ MAB/2018/5, co-financed by EU within Smart Growth Operational Programme). The 'International Centre for Theory of Quantum Technologies' project (contract no.\ MAB/2018/5) is carried out within the International Research Agendas Programme of the Foundation for Polish Science co-financed by the European Union from the funds of the Smart Growth Operational Programme, axis IV: Increasing the research potential (Measure 4.3). 
    All seismic data were downloaded through the EarthScope Consortium Web Services (https://service.iris.edu/). 
    The processing of this article benefited from various Python packages, including Obspy \cite{Obspy}. 
		
	


    \textbf{Software:} SVTest Software \cite{SVTest} and all other additional source codes are available on the GitHub repository: \url{https://github.com/DQI-UG/EarthSV}.

    \textbf{Competing interests:} The authors declare that there are no competing interests.

\end{acknowledgments}


\bibliography{EarthSV}


\appendix

\section{Data}
In this section, we will describe in more detail how we obtained and processed the data used in this work. Note that the seismic data are free and available in the databases we detail.

\subsection{Data from quantum random number generator}

We obtained around 20~GB of binary data from the quantum random number generator provided by Marcin Pawłowski and SeQure Quantum company \cite{SeQureQuantum}.
Due to memory constraints, we divided the data into five equal parts, each containing over 33 billions random bits.

\subsection{Seismic events}
\label{app:events}

For earthquake data, first, we obtain a list of seismic events using the Global CMT catalog \cite{CMT1, CMT2}. 
We searched the catalog for events with moment magnitudes between $\mathrm{Mw}=6.0$ and $\mathrm{Mw}=7.0$ and depths between 30~km and 1000~km between 1976 and 2021. 
We saved this information in a file for later use within the format of ``CMTSOLUTIONS'', saving 2218 events. 
We then extract the necessary information for each earthquake, such as location, initial time, and magnitude. 
Once we have the necessary information, we downloaded the seismic signal of each earthquake using \textit{MassDownloader} from the \textit{Obspy} module \cite{Obspy}. 
In particular, we downloaded all the available stations from the IRIS seismological service \cite{iris}. 
This procedure takes close to 170 hours and allows us to download data from 1729 events that contain information about 7114 stations and 426710 data files. 

Second, we process the data; this means taking the instrumental response of the data using an Obspy module, which allows us to transform the data from counts to physical magnitudes; then, we apply a detrend to eliminate the linear tendency, decimate, and interpolate the signal to have two samples per second, and cut the signal in the period designed.
Given the number of files, we divided the work into multiple jobs (eleven), obtaining displacement, acceleration, and velocity data for two different time windows mentioned in the main text. 
This process took around 177 hours. 
Then, we concatenated (united) the files from the previous division and obtained six files corresponding to two types of time windows for each signal response: displacement, acceleration, and velocity.

Finally, we removed broken lines (lines with non-numeric values) from the files. 
The first time window, including the body wave, contains 284282034 lines (numerical values) and has a file size of 7249134442~B. 
The second time windows corresponding to the surface waves contain 520407646 lines (numerical values) and have a file size of 13270355887~B.

\subsection{Noise}
\label{app:noise}

Apart from its natural origin, ground vibration can be caused by external events such as human or animal movement, traffic, etc. 
From seismic stations placed around the world, we have chosen a subset of apparatuses with detectors that are sensitive to such noise. 
The channels in these detectors should have two features: continuous recordings and a large number of samples per second (sps), such as $20$, $40$, $80$, or $100$. 
We have picked 362 points: Chile (7), Argentina (17), Iceland (7), Indonesia, Malaysia, and Australian External Territories close to them (36), Nepal and India (130), and the state of California in the USA (165). 
All of them are located near significant human clusters like metropolises or big cities (Santiago, Reykjavik, Singapore, Kathmandu, Los Angeles). 
The list of stations is taken from the IRIS website \cite{iris} and is available on the project GitHub repository \cite{SVTest} in the file ``gmap-stations.txt''. 
We have chosen two time ranges of the data gathered from the stations. 
The first was from 1 January to 31 January 2015, and the second was from 1 March to 22 April 2017. 
Both ranges have been divided into 24-hour periods.
Once we have the data, we eliminate the instrumental response by applying a 1--15~Hz filter to account for the environmental noise. 
Then, we eliminate the linear tendency and resample the data to four samples per second to have homogeneity in the data. 
All of this is possible thanks to the Python module Obspy \cite{Obspy}. 
After preprocessing 4.9~GB of raw data, we got 2050938429 bits of the seed.

\section{Discretization}
\label{sec:discretization_details}

As we have already introduced in Section \ref{sec:discretization}, the seismic data are in the form of a sequence of floating-point numbers.
To obtain the binary sequence, that is needed for our randomness test and also for any modern cryptographic application, we are using various discretization methods. 
In the following, we define the five discretization methods in detail.
The names of the discretizations refer to appropriate function names in the SVTest's source code for ease of reference.

\begin{definition}[discretizeEarthDataEvents1]
\label{disc:1}
	Let $d = (d_i)_{i=1}^n$ be a sequence of the input file values where $n$ is the number of these values. 
    Then the discretized binary sequence $s = (s_i)_{i=1}^n$ is defined as
	\begin{equation}
		s_i \coloneqq
		\begin{cases}
			0 : d_i \geq 0\\
			1 : d_i < 0
		\end{cases}.
	\end{equation}
\end{definition}

\begin{definition}[discretizeEarthDataEvents2]
\label{disc:2}
	Let $d = (d_i)_{i=1}^n$ be a sequence of the input file values where $n$ is the number of these values. 
    Then the discretized binary sequence $s = (s_i)_{i=1}^n$ is defined as
	\begin{equation}
		s_i \coloneqq
		\begin{cases}
			0 : d_i \geq \mathbb{E}d\\
			1 : d_i < \mathbb{E}d
		\end{cases}
	\end{equation}
	where 
	\begin{equation}
		\mathbb{E}d \coloneqq \frac{\sum\limits_{i=1}^{n}d_i}{n}
	\end{equation}
    is the average value of the sequence $d$.
\end{definition}

\begin{definition}[discretizeEarthDataEvents3]
\label{disc:3}
	Let $d = (d_i)_{i=1}^n$ be a sequence of the input file values where $n$ is the number of these values. 
    Then the discretized binary sequence $s = (s_i)_{i=1}^n$ is defined as
	\begin{equation}
		s_i \coloneqq
		\begin{cases}
			0 : \lvert d_i\rvert \geq \mathbb{E}\lvert d\rvert\\
			1 : \lvert d_i\rvert < \mathbb{E}\lvert d\rvert
		\end{cases}
	\end{equation}
	where 
	\begin{equation}
		\mathbb{E}\lvert d\rvert \coloneqq \frac{\sum\limits_{i=1}^{n}\lvert d_i\rvert}{n}
	\end{equation}
	is the average value of the absolute values of the sequence $d$.
\end{definition}

\begin{definition}[discretizeEarthDataEvents4]
\label{disc:4}
	Let $d = (d_i)_{i=1}^n$ be a sequence of the input file values where $n$ is the number of these values. 
    Then the discretized binary sequence $s = (s_i)_{i=1}^n$ is defined as
	\begin{equation}
		s_i \coloneqq
		\begin{cases}
			0 :  d_{i+1} \geq d_i\\
			1 : d_{i+1} < d_i
		\end{cases}.
	\end{equation}
\end{definition}

\begin{definition}[discretizeEarthDataEvents5]
\label{disc:5}
	Let $d = (d_i)_{i=1}^n$ be a sequence of the input file values where $n$ is the number of these values. 
    Then the discretized binary sequence $s = (s_i)_{i=1}^n$ is defined as
	\begin{equation}
		s_i \coloneqq
		\begin{cases}
			0 :  \lvert d_{i+1} \rvert \geq \lvert d_i \rvert\\
			1 : \lvert d_{i+1} \rvert < \lvert d_i \rvert
		\end{cases}.
	\end{equation}
\end{definition}

\section{Details of numerical results}

In this section, we gather the tables with numerical results obtained during our experiments.
Table \ref{tab:qrng} shows detailed results for the quantum random number generator described in Section \ref{sec:resultsQRNG}. 
In \cref{tab:bits1Mb,tab:bits10Mb,tab:bits10Mb,tab:bits100Mb,tab:bits1Gb,tab:bits1.5Gb,tab:disc5,tab:finalepsilon} we gathered results for seismic noise described in Section \ref{sec:resultsNoise}.
Finally, in Table \ref{tab:events} we present combined results for seismic events described in Section \ref{sec:resultsEvents}.

\begin{table}
    \centering
    \begin{tabular}{|c|c|c|c|c|c|}
        \hline
        &\multicolumn{5}{c|}{Discretization type}\\
        \hline
h&1&2&3&4&5\\\hline
0&0.0005690&0.0005660&0.1315510&0.0003220&0.0002130\\
1&0.0345043&0.0344991&0.1593988&0.1420655&0.1714941\\
2&0.1226864&0.1226757&0.1904835&0.3518887&0.2669886\\
3&0.2037178&0.2037109&0.2237833&0.3707880&0.3080776\\
4&0.2742169&0.2742294&0.2505766&0.4137014&0.3337402\\
5&0.3102384&0.3102700&0.2677215&0.4336691&0.3458401\\
6&0.3508916&0.3508916&0.2850156&0.4532433&0.3811881\\
7&0.3815717&0.3815717&0.3018971&0.4655244&0.4166667\\
8&0.4322034&0.4322034&0.3216523&0.5000000&0.5000000\\
9&0.4636364&0.4636364&0.3401715&$\vdots$&$\vdots$\\
10&0.5000000&0.5000000&0.3587355&&\\
11&$\vdots$&$\vdots$&0.3774831&&\\
12&&&0.3946779&&\\
13&&&0.4112256&&\\
14&&&0.4411765&&\\
15&&&0.5000000&&\\
$\vdots$&&&$\vdots$&&\\\hline
$\epsilon$&0.0564116&0.0564079&0.1601708&0.1298538&0.1179996\\\hline
    \end{tabular}
    \caption{Values of $\tilde{\epsilon}_h$ approximated as in Eq.\ (\ref{eq:eh}). Histories $h$ have length ranging from $0$ to the first one for which $\tilde{\epsilon}_h=\frac{1}{2}$. Results are obtained from $1$~Mb of preprocessed data, i.e., $1$~Mb bits that are the output of discretization. The types of discretization $1$--$5$ as defined in Section \ref{sec:discretization} are given in corresponding columns.}
    \label{tab:bits1Mb}
\end{table}

\begin{table}
    \centering
    \begin{tabular}{|c|c|c|c|c|c|}
        \hline
        &\multicolumn{5}{c|}{Discretization type}\\
        \hline
h&1&2&3&4&5\\\hline
0	&0.0000962	&0.0000962	&0.2521864	&0.0000309	&0.0000556	\\
1	&0.0240344	&0.0240344	&0.4121683	&0.1468744	&0.1742430	\\
2	&0.0885760	&0.0885760	&0.4695446	&0.2990353	&0.2541729	\\
3	&0.1388335	&0.1388335	&0.4886295	&0.3242693	&0.3278253	\\
4	&0.2579986	&0.2579986	&0.4956605	&0.3892212	&0.3301635	\\
5	&0.2787136	&0.2787136	&0.4974267	&0.4086667	&0.3528231	\\
6	&0.3415209	&0.3415209	&0.4983078	&0.4334802	&0.3631016	\\
7	&0.3691173	&0.3691173	&0.4987335	&0.4492071	&0.3945313	\\
8	&0.4076175	&0.4076175	&0.4989797	&0.4598069	&0.4629630	\\
9	&0.4184367	&0.4184367	&0.4991350	&0.4648480	&0.5000000	\\
10	&0.4740260	&0.4740260	&0.4992330	&0.5000000	&0.5000000	\\
11	&0.5000000	&0.5000000	&0.4992973	&$\vdots$	&$\vdots$	\\
12	&$\vdots$	&$\vdots$	&0.4993388	&	&	\\
13	&	&	&0.4993729	&	&	\\
14	&	&	&0.5000000	&	&	\\
$\vdots$	&	&	&$\vdots$	&	&	\\\hline
$\epsilon$	&0.0440134	&0.0440134	&0.3494201	&0.1199104	&0.1179383
\\\hline
    \end{tabular}
    \caption{Values of $\tilde{\epsilon}_h$ approximated as in Eq.\ (\ref{eq:eh}). Histories $h$ have length ranging from $0$ to the first one for which $\tilde{\epsilon}_h=\frac{1}{2}$. Results are obtained from $10$~Mb of preprocessed data, i.e., $10$~Mb bits that are the output of discretization. The types of discretization $1$--$5$ as defined in Section \ref{sec:discretization} are given in corresponding columns.}
    \label{tab:bits10Mb}
\end{table}

\begin{table}
    \centering
    \begin{tabular}{|c|c|c|c|c|c|}
        \hline
        &\multicolumn{5}{c|}{Discretization type}\\
        \hline
h&1&2&3&4&5\\\hline
0	&0.0000342	&0.0000342	&0.2646460	&0.0003795	&0.0000039	\\
1	&0.0349100	&0.0349101	&0.3905752	&0.1326140	&0.1632015	\\
2	&0.1215081	&0.1215081	&0.4394608	&0.3119693	&0.2498206	\\
3	&0.1856914	&0.1856914	&0.4569485	&0.3325120	&0.2974095	\\
4	&0.2890622	&0.2890622	&0.4667720	&0.3935813	&0.3279870	\\
5	&0.3146381	&0.3146381	&0.4721968	&0.4139880	&0.3513768	\\
6	&0.3782990	&0.3782990	&0.4761190	&0.4396989	&0.3578892	\\
7	&0.4178900	&0.4178900	&0.4788871	&0.4599814	&0.3627266	\\
8	&0.4587673	&0.4587673	&0.4809769	&0.4760734	&0.3690631	\\
9	&0.4664235	&0.4664235	&0.4826140	&0.4812527	&0.3933596	\\
10	&0.4761194	&0.4761194	&0.4839533	&0.4858114	&0.5000000	\\
11	&0.4837925	&0.4837925	&0.4850877	&0.4901020	&$\vdots$	\\
12	&0.5000000	&0.5000000	&0.4860544	&0.5000000	&	\\
13	&$\vdots$	&$\vdots$	&0.4869166	&$\vdots$	&	\\
14	&	&	&0.4876673	&	&	\\
15	&	&	&0.4883374	&	&	\\
16	&	&	&0.4889023	&	&	\\
17	&	&	&0.4894084	&	&	\\
18	&	&	&0.5000000	&	&	\\
$\vdots$	&	&	&$\vdots$	&	&	\\ \hline
$\epsilon$	&0.0559002	&0.0559002	&0.3428980	&0.1190001	&0.1121640
\\\hline
    \end{tabular}
    \caption{Values of $\tilde{\epsilon}_h$ approximated as in Eq.\ (\ref{eq:eh}). Histories $h$ have length ranging from $0$ to the first one for which $\tilde{\epsilon}_h=\frac{1}{2}$. Results are obtained from $100$~Mb of preprocessed data, i.e., $100$~Mb bits that are the output of discretization. The types of discretization $1$--$5$ as defined in Section \ref{sec:discretization} are given in corresponding columns.}
    \label{tab:bits100Mb}
\end{table}

\begin{table}
    \centering
    \begin{tabular}{|c|c|c|c|c|c|}
        \hline
        &\multicolumn{5}{c|}{Discretization type}\\
        \hline
h&1&2&3&4&5\\\hline
0	&0.0000379	&0.0000432	&0.2830588	&0.0002670	&0.0000039	\\
1	&0.0270989	&0.0271045	&0.3965429	&0.1237040	&0.1560276	\\
2	&0.1061174	&0.1061235	&0.4414523	&0.2979867	&0.2413325	\\
3	&0.1877859	&0.1877891	&0.4584365	&0.3253078	&0.2863797	\\
4	&0.2924943	&0.2924883	&0.4684143	&0.3900419	&0.3114722	\\
5	&0.3199377	&0.3199340	&0.4740149	&0.4115044	&0.3177254	\\
6	&0.3900256	&0.3900220	&0.4781338	&0.4414071	&0.3320376	\\
7	&0.4306055	&0.4306083	&0.4809785	&0.4625811	&0.3445827	\\
8	&0.4617990	&0.4617994	&0.4831211	&0.4765524	&0.3590555	\\
9	&0.4670683	&0.4670655	&0.4847453	&0.4791034	&0.3791076	\\
10	&0.4731727	&0.4731737	&0.4861209	&0.4820932	&0.4179695	\\
11	&0.4804570	&0.4804556	&0.4872526	&0.4870514	&0.4488332	\\
12	&0.4882257	&0.4882257	&0.4882274	&0.4907799	&0.4745401	\\
13	&0.4888889	&0.4888889	&0.4890809	&0.4965870	&0.4852129	\\
14	&0.5000000	&0.5000000	&0.4898071	&0.5000000	&0.4878935	\\
15	&$\vdots$	&$\vdots$	&0.4904634	&$\vdots$	&0.4938272	\\
16	&	&	&0.4909877	&	&0.5000000	\\
17	&	&	&0.4914608	&	&$\vdots$	\\
18	&	&	&0.4918838	&	&	\\
19	&	&	&0.4922764	&	&	\\
20	&	&	&0.4926278	&	&	\\ \hline
$\epsilon$	&0.0524899	&0.0524947	&0.3540503	&0.1143872	&0.1072149
\\\hline
    \end{tabular}
    \caption{Values of $\tilde{\epsilon}_h$ approximated as in Eq.\ (\ref{eq:eh}). Histories $h$ have length ranging from $0$ to the first one for which $\tilde{\epsilon}_h=\frac{1}{2}$. Results are obtained from $1$~Gb of preprocessed data, i.e., $1$~Gb bits that are the output of discretization. The types of discretization $1$--$5$ as defined in Section \ref{sec:discretization} are given in corresponding columns.}
    \label{tab:bits1Gb}
\end{table}

\begin{table}
    \centering
    \begin{tabular}{|c|c|c|c|c|c|}
        \hline
        &\multicolumn{5}{c|}{Discretization type}\\
        \hline
h&1&2&3&4&5\\\hline
0	&0.0000348	&0.0000384	&0.2732685	&0.0002804	&0.0000022	\\
1	&0.0289455	&0.0289494	&0.3923463	&0.1244670	&0.1536221	\\
2	&0.1059002	&0.1059046	&0.4384191	&0.2960425	&0.2395928	\\
3	&0.1875490	&0.1875526	&0.4562146	&0.3246583	&0.2864840	\\
4	&0.2961916	&0.2961876	&0.4667090	&0.3914730	&0.3168470	\\
5	&0.3242895	&0.3242866	&0.4725859	&0.4130765	&0.3195301	\\
6	&0.3947249	&0.3947223	&0.4768785	&0.4433095	&0.3369695	\\
7	&0.4349237	&0.4349255	&0.4798441	&0.4643555	&0.3499869	\\
8	&0.4648939	&0.4648947	&0.4820801	&0.4780088	&0.3625279	\\
9	&0.4697606	&0.4697591	&0.4837781	&0.4804853	&0.3821724	\\
10	&0.4748386	&0.4748390	&0.4852120	&0.4826861	&0.4112295	\\
11	&0.4822382	&0.4822382	&0.4863940	&0.4873306	&0.4437313	\\
12	&0.4874533	&0.4874533	&0.4874089	&0.4917997	&0.4695659	\\
13	&0.4926606	&0.4926606	&0.4883009	&0.4951574	&0.4834823	\\
14	&0.5000000	&0.5000000	&0.4890593	&0.5000000	&0.4853517	\\
15	&$\vdots$	&$\vdots$	&0.4897436	&$\vdots$	&0.4933659	\\
16	&	&	&0.4902970	&	&0.5000000	\\
17	&	&	&0.4907953	&	&$\vdots$	\\
18	&	&	&0.4912457	&	&	\\
19	&	&	&0.4916605	&	&	\\
20	&	&	&0.4920322	&	&	\\ \hline
$\epsilon$	&0.0531552	&0.0531586	&0.3474943	&0.1143968	&0.1066620\\\hline
    \end{tabular}
    \caption{Values of $\tilde{\epsilon}_h$ approximated as in Eq.\ (\ref{eq:eh}). Histories $h$ have length ranging from $0$ to the first one for which $\tilde{\epsilon}_h=\frac{1}{2}$. Results are obtained from $1.5$~Gb of preprocessed data, i.e., $1.5$~Gb bits that are the output of discretization. The types of discretization $1$--$5$ as defined in Section \ref{sec:discretization} are given in corresponding columns.}
    \label{tab:bits1.5Gb}
\end{table}

\begin{table}
    \centering
    \begin{tabular}{|c|c|c|c|c|c|}
        \hline
        &\multicolumn{5}{c|}{Preprocessed data}\\
        \hline
h&$1$ Mb&$10$ Mb&$100$ Mb&$1$ Gb&$1.5$ Gb\\\hline
0	&0.0005660	&0.0000962	&0.0000342	&0.0000432	&0.0000384	\\
1	&0.0344991	&0.0240344	&0.0349101	&0.0271045	&0.0289494	\\
2	&0.1226757	&0.0885760	&0.1215081	&0.1061235	&0.1059046	\\
3	&0.2037109	&0.2037109	&0.1856914	&0.1877891	&0.1875526	\\
4	&0.2742294	&0.2579986	&0.2890622	&0.2924883	&0.2961876	\\
5	&0.3102700	&0.2787136	&0.3146381	&0.3199340	&0.3242866	\\
6	&0.3508916	&0.3415209	&0.3782990	&0.3900220	&0.3947223	\\
7	&0.3815717	&0.3691173	&0.4178900	&0.4306083	&0.4349255	\\
8	&0.4322034	&0.4076175	&0.4587673	&0.4617994	&0.4648947	\\
9	&0.4636364	&0.4184367	&0.4664235	&0.4670655	&0.4697591	\\
10	&0.5000000	&0.4740260	&0.4761194	&0.4731737	&0.4748390	\\
11	&$\vdots$	&0.5000000           &0.4837925	&0.4804556	&0.4822382	\\
12  &	        &$\vdots$	        &0.5000000           &0.4882257	&0.4874533	\\
13	&	        &	        &$\vdots$           &0.4888889	&0.4926606	\\
14	&	        &	        &           &0.5000000	&0.5000000	\\
15	&	        &           &	        &$\vdots$	&$\vdots$	\\
16	&	        &	        &           &	&	\\
17	&	        &           &           &	&	\\\hline
$\epsilon$	&0.0564079	&0.0440134	&0.0559002	&0.0524947	&0.0531586	\\ \hline
    \end{tabular}
    \caption{Values of $\tilde{\epsilon}_h$ approximated as in Eq.\ (\ref{eq:eh}). Histories $h$ have length ranging from $0$ to the first one for which $\tilde{\epsilon}_h=\frac{1}{2}$. Results are obtained for discretization $2$ for all of the preprocessed data, i.e., $1.5$~Gb bits that are the output of discretization.}
    \label{tab:disc5}
\end{table}

\begin{table}
    \centering
    \begin{tabular}{|l|c|c|c|c|c|}
        \hline
        Data&\multicolumn{5}{c|}{Discretization type}\\
        \hline
Mb&1&2&3&4&5\\\hline
$1$&0.056412&0.056408&0.160171&0.129854&0.117910	\\
$10$	&0.044013	&0.044013	&0.349420	&0.119910	&0.117938	\\
$100$	&0.055900	&0.055900	&0.342898	&0.119000	&0.112164	\\
$1000$	&0.052490	&0.052495	&0.354050	&0.114387	&0.1072150	\\ 
$1500$  &0.053155	&0.053159	&0.347494	&0.114397	&0.106662	\\ \hline
    \end{tabular}
    \caption{Values of $\epsilon$ approximated as in Eq.\ (\ref{eq:svtest_epsilon}). The values from the rows of the table are depicted in figures \ref{fig:discrNo3} and \ref{fig:discrOnly3}. The values from the columns of the table are depicted in figures \ref{fig:bitsNo3} and \ref{fig:bitsOnly3}.}
    \label{tab:finalepsilon}
\end{table}

\clearpage

\begin{table*}
    \centering
    \begin{tabular}{|c|c|c|c|c|c|}
        \hline
        &\multicolumn{5}{c|}{Data part}\\
        \hline
h&1&2&3&4&5\\
\hline
0	&	0.000009335230267	&	0.000003310188415	&	0.000001191154359	&	0.000002107408312	&	0.000004291110569	\\
1	&	0.000025846695087	&	0.000004582753321	&	0.000005082375665	&	0.000006998426217	&	0.000023041784750	\\
2	&	0.000041590050766	&	0.000007323849964	&	0.000021254931950	&	0.000015621324682	&	0.000034600480710	\\
3	&	0.000050795833985	&	0.000018737676243	&	0.000039732694784	&	0.000032788531912	&	0.000048353111581	\\
4	&	0.000054756366839	&	0.000057293155124	&	0.000074985501337	&	0.000072820610613	&	0.000069964268718	\\
5	&	0.000109086894526	&	0.000087653258636	&	0.000112385714904	&	0.000123156489192	&	0.000104963916382	\\
6	&	0.000126993677881	&	0.000179383831999	&	0.000171806788213	&	0.000169408148117	&	0.000138753605265	\\
7	&	0.000238525787233	&	0.000257658370478	&	0.000331398847916	&	0.000292047973044	&	0.000270134018270	\\
8	&	0.000371302665140	&	0.000363343328917	&	0.000410622578234	&	0.000357706721408	&	0.000429371277418	\\
9	&	0.000577317047750	&	0.000483661467972	&	0.000686264587309	&	0.000561595178471	&	0.000578862553060	\\
10	&	0.000937648425864	&	0.000782908165541	&	0.000901548429593	&	0.000743012457801	&	0.000890928303893	\\
11	&	0.001118808877503	&	0.001276773719423	&	0.001156081983579	&	0.001156608340277	&	0.001507358668999	\\
12	&	0.001771414946252	&	0.001965556157189	&	0.001802640831625	&	0.001633033712324	&	0.002659534871718	\\
13	&	0.002892755602853	&	0.002923235112898	&	0.003182758047011	&	0.002494190795683	&	0.003326580299263	\\
14	&	0.004037554214334	&	0.003862042767375	&	0.004325725976559	&	0.004093153991520	&	0.004300613692009	\\
15	&	0.005399122493831	&	0.005512831659066	&	0.006170190787226	&	0.005881609903210	&	0.006063492063492	\\
16	&	0.008991374645915	&	0.008421620249952	&	0.008362850834697	&	0.008796979360018	&	0.008522141713307	\\
17	&	0.011502970547339	&	0.011528389857572	&	0.013996445347213	&	0.012799207439839	&	0.012590791284037	\\
18	&	0.019800698848195	&	0.017354537422693	&	0.017521069640707	&	0.019021568257561	&	0.020265054040144	\\
19	&	0.026045676161150	&	0.028502660248290	&	0.027157360406091	&	0.026028819940283	&	0.027580415966929	\\
20	&	0.039222964612720	&	0.040465351542742	&	0.039463601532567	&	0.039354838709677	&	0.039939332659252	\\
21	&	0.058823529411765	&	0.057410217501265	&	0.057916034395549	&	0.059221200648999	&	0.056767476450174	\\
22	&	0.089068825910931	&	0.085975024015370	&	0.082346609257266	&	0.081488933601610	&	0.090128755364807	\\
23	&	0.118143459915612	&	0.123721881390593	&	0.119047619047619	&	0.124713958810069	&	0.128230616302187	\\
24	&	0.184410646387833	&	0.176855895196507	&	0.175925925925926	&	0.175105485232068	&	0.196428571428571	\\
25	&	0.254385964912281	&	0.263440860215054	&	0.252577319587629	&	0.256756756756757	&	0.250000000000000	\\
26	&	0.397435897435897	&	0.371794871794872	&	0.413043478260870	&	0.369565217391304	&	0.360000000000000	\\
27	&	0.500000000000000	&	0.500000000000000	&	0.500000000000000	&	0.500000000000000	&	0.500000000000000	\\
\hline
$\epsilon$ & 0.000027701862941 & 0.000013186161706 & 0.000016819153224 & 0.000016170395862 & 0.000024421920347
\\\hline
    \end{tabular}
    \caption{Results for the data from the quantum random number generator. Values of $\tilde{\epsilon}_h$ are approximated as in Eq.\ (\ref{eq:eh}) and the final epsilons are calculated using the exponential weighted average from Eq.\ (\ref{eq:svtest_epsilon}). Histories $h$ have length ranging from $0$ to the first one for which $\tilde{\epsilon}_h=\frac{1}{2}$. Results are obtained from five portions (shown as columns) of data. Each portion consists of over 33 billion random bits.}
    \label{tab:qrng}
\end{table*}

\begin{table*}
    \centering
    \begin{tabular}{|c|c|c|c|c|c|}
        \hline
        File&\multicolumn{5}{c|}{Discretization type}\\
        \hline
&1&2&3&4&5\\\hline
ACCWF	& 0.071715892300340	& 0.071715892300340	& 0.235359194614212	& 0.047539945415787	& 0.087274616642618	\\
DISPWF	& 0.208395504327412	& 0.208395504327412	& 0.307665290116610	& 0.154972257510402	& 0.131958493734998	\\
VELWF	& 0.151712987377098	& 0.151712987377098	& 0.264583416462133	& 0.090812728222981	& 0.043024104589876	\\
ACCSWF	& 0.117156671701359	& 0.117156671701359	& 0.233832614203023	& 0.060853189218447	& 0.048078338501318	\\
DISPSWF	& 0.223122159588165	& 0.223122159588165	& 0.296299615798759	& 0.188365748213474	& 0.172095116214049	\\
VELSWF	& 0.188435869518004	& 0.188435869518004	& 0.266977575713240	& 0.133181565740910	& 0.100437780509834	\\ \hline
    \end{tabular}
    \caption{Values of epsilons for seismic events for five types of discretization, two types of time windows: ``WF'' and ``SWF'', and three signal responses: acceleration, displacement, and velocity. The ``WF'' type files consist of 284282034 data points, and the ``SWF'' type files consist of 520407646 data points.}
    \label{tab:events}
\end{table*}

\end{document}